\documentclass[14pt,a4paper]{article}
\usepackage[authoryear,longnamesfirst]{natbib}
\usepackage{tikz}
\usepackage{amsmath, amsfonts, amssymb, amsthm, mathtools, bbm, bm}
\usepackage{physics}
\usepackage{graphicx}
\usepackage{caption}
\usepackage{subcaption}
\usepackage{microtype}
\usepackage{verbatim}
\usepackage{enumitem}
\usepackage{booktabs}
\usepackage{algorithm}
\usepackage{algorithmic}
\usepackage{etoolbox}
\usepackage[dvipsnames]{xcolor}
\usepackage[colorlinks=true,
            linkcolor=violet,
            citecolor=violet,
            urlcolor=violet]{hyperref}
\usepackage[capitalize,nameinlink]{cleveref}
\usepackage[margin=1in]{geometry}

\newcommand{\bj}{\mathbf{j}}

\newcommand{\be}{\begin{equation}}
\newcommand{\ee}{\end{equation}}
\newcommand{\bq}{\begin{eqnarray}}
\newcommand{\eq}{\end{eqnarray}}

\newtheorem{theorem}{Theorem}
\newtheorem{lemma}{Lemma}

    {\hspace*{\fill}$\Box$\vspace{1.5ex}\par}

\newcommand{\shorttitle}[1]{}
\newcommand{\shortauthors}[1]{}

\makeatletter
\renewcommand{\date}[1]{\gdef\@date{}}
\makeatother

\let\origAuthor\author

\newcommand{\arxivAuthorBlock}{}
\newcounter{arxivAuthorNum}

\renewcommand{\author}[1]{%
  \stepcounter{arxivAuthorNum}%
  \ifnum\value{arxivAuthorNum}>1
    \appto\arxivAuthorBlock{\,,\quad #1}%
  \else
    \gdef\arxivAuthorBlock{#1}%
  \fi
}

\newcommand{\ead}[1]{%
  \appto\arxivAuthorBlock{\,\thanks{\texttt{#1}}}%
}

\newcommand{\affiliation}[1]{%
  \appto\arxivAuthorBlock{\\[0.5em]\small\textit{#1}}%
  \origAuthor{\arxivAuthorBlock}%
  \maketitle%
}

\newenvironment{keywords}%
  {\par\medskip\noindent\textbf{Keywords:~}}%
  {\par\medskip}
\newcommand{\sep}{;\;}

\begin{document}
% ================================================================
%  content.tex
%  Shared body for main.tex (cas-sc journal) and arxiv.tex (article).
%  Do not add \begin{document} or \end{document} here.

%  Front matter
% ----------------------------------------------------------------
\title{MLMC-qDRIFT: Multilevel Variance Reduction for Randomized Quantum Hamiltonian Simulation}
\shorttitle{Multilevel Hamiltonian Simulations}

\author{Pegah Mohammadipour}\ead{pegahmp@psu.edu}
\author{Xiantao Li}\ead{xiantao.li@psu.edu}
\affiliation{Department of Mathematics, The Pennsylvania State University,
             University Park, Pennsylvania 16802, USA}

\shortauthors{P. Mohammadipour, X. Li}

\begin{abstract}
Simulating quantum dynamics is one of the central applications of quantum computing.  For Hamiltonians written as a sum of many terms, deterministic
Trotter--Suzuki product formulas can require applying a large number of term-wise evolutions at each time step, leading to high circuit costs for large or dense systems.  Randomized methods such as qDRIFT offer an alternative: each step samples only one Hamiltonian term, giving a circuit depth with no explicit dependence on the number of terms.  However, when
qDRIFT is used for observable estimation, high precision requires many independent random circuit realizations, resulting in a total gate complexity that scales as $\mathcal{O}(\varepsilon^{-3})$.

We introduce a multilevel Monte Carlo framework for qDRIFT that reduces this sampling overhead.  The method constructs a hierarchy of qDRIFT
estimators with increasing circuit depths and couples adjacent levels by
sharing their random Hamiltonian-term samples.  This coupling makes the
variance of the level differences decay with depth, allowing most samples
to be taken on cheaper, coarse circuits and only a few on expensive, fine
circuits.  We prove that the resulting MLMC-qDRIFT estimator reduces the
total gate complexity for fixed-precision observable estimation from the
standard qDRIFT scaling $\mathcal{O}(\varepsilon^{-3})$ to
$\mathcal{O}(\varepsilon^{-2}\log^2(1/\varepsilon))$, while preserving
qDRIFT's lack of explicit dependence on the number of Hamiltonian terms.
Numerical experiments for spin-chain dynamics confirm the predicted
variance decay and demonstrate the practical gate-count savings of the
multilevel construction.
\end{abstract}

\begin{keywords}
Quantum Algorithms \sep Hamiltonian Simulations \sep Random Compilers \sep Multilevel Monte-Carlo
\end{keywords}

\maketitle

\section{Introduction}
One of the most compelling promises of quantum computing is its ability to
simulate the dynamics of other quantum systems efficiently. This capability,
known as quantum simulation, was among the original motivations for building
quantum computers~\cite{feynman1982simulating}, and it remains one of the
field's most active research directions, with applications spanning
high-energy physics, condensed matter, and quantum chemistry~\cite{IM2014,
aspuru2005simulated, reiher2017elucidating, mcardle2020}.

The fundamental difficulty is associated with the system dimension. For a quantum system of $m$
particles, the Hilbert space grows exponentially in $m$, making direct
numerical integration of the Schrödinger equation,
\begin{equation}
    i\,\frac{d}{dt}\ket{\psi(t)} = H\ket{\psi(t)},
\end{equation}
computationally intractable for all but the smallest systems. Quantum
simulation sidesteps this curse of dimensionality by approximating the
unitary time-evolution operator $U(t) = e^{-iHt}$ as a sequence of
programmable quantum gates, each acting on a small number of qubits. The
central challenge then becomes one of efficiency: how many gates are needed
to approximate $U(t)$ to a target precision $\varepsilon$?

For Hamiltonians of the form $H = \sum_{j=1}^{M} h_j H_j$, which arise
naturally in fermionic systems, lattice models, and sparse scientific
computing problems, a standard gate-based approach is the deterministic
Lie--Trotter--Suzuki product formula~\cite{lloyd1996universal,
suzuki1976generalized, suzuki1991}. These methods enjoy well-understood
error bounds, but each time step involves all the Hamiltonian terms, and therefore the cost can be unfavorable when $M$ is large, especially when commutator structures do not compensate.
This can be prohibitively challenging for dense Hamiltonians or long-range
interactions~\cite{childs2021}. More sophisticated post-Trotter techniques,
including linear combination of unitaries
(LCU)~\cite{childs2012, Chakraborty2024implementingany}, quantum signal
processing~\cite{low2017, motlagh2024}, and
qubitization~\cite{Low2019hamiltonian}, improve the precision scaling but require complex control logic gates, making them challenging to implement on near-term hardware with limited coherence times.

To mitigate these issues, randomised algorithms such as
\textit{quantum stochastic drift protocol (qDRIFT)}~\cite{campbell2019random} have been introduced. The
qDRIFT protocol replaces the deterministic application of all $M$ terms
with stochastic sampling: at each step, a single term $H_j$ is selected
at random with probability proportional to its weight $|h_j|$, and only
that term's gate is applied. The key insight is that the resulting random
channel, when averaged over many realizations, closely approximates the
ideal evolution, with a total gate count that is not directly \emph{dependent of $M$}: Rather, the complexity scales quadratically with $\sum_j \abs{h_j}$, providing great advantage for Hamiltonians with long-range couplings.  
%This makes qDRIFT particularly attractive for Hamiltonians with many terms,
%such as those arising in quantum chemistry or many-body physics.

The main current bottleneck, is a total gate scaling of $\mathcal{O}(\varepsilon^{-3})$: the
circuit depth must grow as $\mathcal{O}(\varepsilon^{-1})$ to suppress the
algorithmic bias due to the random sampling of $H_j$, while $\mathcal{O}(\varepsilon^{-2})$ independent realizations are needed to suppress the statistical variance, i.e., the standard Monte Carlo error. It is precisely the purpose of this paper to reduce the dependence of the gate counts on the precision $\varepsilon.$

\subsection{Standard qDRIFT} \label{subsec:standard_qdrift}

qDRIFT can be viewed as a Monte Carlo discretization of Hamiltonian time evolution. Each random circuit is one sample path. Its average has a deterministic discretization bias, and finite sampling introduces statistical error. This separation is analogous to the bias-variance decomposition in classical Monte Carlo methods for SDEs \cite{jin2025partially}.

We begin with a self-contained review of the qDRIFT protocol, as its
structure is the foundation for our multilevel construction.

Consider a Hamiltonian $H = \sum_{j=1}^{M} h_j H_j$, where $H_j$ are
Hermitian operators with $\|H_j\| = 1$ and $h_j > 0$, after absorbing signs into $H_j$. Define the
one-norm $\lambda = \sum_{j=1}^{M} h_j$ and the sampling probabilities
$p_j = h_j/\lambda$. To approximate the evolution $U(t) = e^{-iHt}$
with $N$ random gates, qDRIFT proceeds as follows:
\begin{enumerate}
    \item \textit{Sample a sequence:} draw $N$ indices
          $\mathbf{j} = (j_1,\dots,j_N)$ i.i.d.\ from $\{p_j\}_{j=1}^M$.
    \item \textit{Construct the circuit:} apply the gates
          $V_{j_k} = e^{-i\tau H_{j_k}}$ in sequence, where
          $\tau = \lambda t/N$ is the step size.
    \item \textit{Estimate the observable:} compute
          $\hat{\mu}^{(\mathbf{j})} = \operatorname{Tr}(O\,V_{j_N}\cdots
          V_{j_1}\,\rho\,V_{j_1}^\dagger\cdots V_{j_N}^\dagger)$
          and average over $n$ independent realizations of $\mathbf{j}$.
\end{enumerate}

Because qDRIFT is randomised, a single realization produces a random
observable estimate whose value depends on the particular gate sequence
drawn. To analyse the algorithm's accuracy, we work with the
\emph{averaged quantum channel}: the map on density matrices obtained
by averaging the random unitary conjugation over all possible gate
sequences. For a single step, this channel is
\begin{equation}
    \mathcal{E}(\rho)
    = \sum_{j=1}^{M} p_j\,e^{-i\tau H_j}\rho\,e^{i\tau H_j},
    \label{eq:qdrift_channel}
\end{equation}
and after $N$ steps the evolution is described by the $N$-fold
composition $\mathcal{E}^N$. The averaged channel plays a dual role.
First, it defines the \emph{algorithmic bias}: the quantity that $n$
independent qDRIFT realizations converge to as $n \to \infty$ is not
the exact expectation value $\mu = \operatorname{Tr}(O\,e^{-iHt}\rho\,
e^{iHt})$, but rather the channel output
$\bar{\mu} = \operatorname{Tr}(O\,\mathcal{E}^N[\rho])$. 
Assuming $\|O\| \leq 1$, the diamond-norm bound~\cite{campbell2019random, chen2021concentration} gives
\begin{equation}
    |\mu - \bar{\mu}|
    \;\le\; \|\mathcal{E}^N - \mathcal{U}(t)\|_\diamond
    \;\le\; \frac{2\lambda^2 t^2}{N},
    \label{eq:qdrift_bias}
\end{equation}
where $\mathcal{U}(t)(\cdot) = e^{-iHt}(\cdot)e^{iHt}$ denotes the exact, 
ideal unitary evolution channel.

To allocate half the total error budget $\varepsilon$ to this bias term, 
we set $|\mu - \bar{\mu}| \leq \varepsilon/2$, which requires 
$N = \mathcal{O}(\lambda^2 t^2/\varepsilon)$ total steps.
Second, the averaged channel
provides the \emph{exact} expected observable value over all random
circuits: by linearity of the trace,
$\mathbb{E}[\hat{\mu}^{(\mathbf{j})}] = \operatorname{Tr}(O\,\mathcal{E}^N[\rho])
= \bar{\mu}$, so $\bar{\mu}$ is both the bias target and the quantity
being estimated by Monte Carlo.

The \emph{statistical variance} arises because individual realizations
fluctuate around $\bar{\mu}$. Let $\sigma^2 = \operatorname{Var}
(\hat{\mu}^{(j)})$ denote the variance of a single trajectory estimate,
combining both the randomness of the sampled gate sequence and the
quantum measurement outcome. Averaging $n$ independent realizations of $\mathbf{j}$
gives an estimator $\hat{\mu} = \frac{1}{n}\sum_{\mathbf{j}} \hat{\mu}^{(\mathbf{j})}$
with variance $\sigma^2/n$. Suppressing this to $\varepsilon^2/2$
requires $n = \mathcal{O}(\sigma^2/\varepsilon^2)$ samples.

The total gate count is therefore
\begin{equation}
    C_{\mathrm{std}}
    = N \times n
    = \mathcal O\!\left(\frac{\lambda^2 t^2 \sigma^2}{\varepsilon^3}\right),
    \label{eq:qdrift_cost}
\end{equation}
which is $\mathcal{O}(\varepsilon^{-3})$ since $\sigma^2 = \mathcal{O}(1)$ for bounded
observables. This scaling is unavoidable within the standard single-level
Monte Carlo framework: the $\varepsilon^{-1}$ factor from the bias and
the $\varepsilon^{-2}$ factor from the variance are controlled by
independent parameters $N$ and $n$, and reducing one does not help the
other. Reducing the total cost requires breaking this independence, which
is precisely what multilevel methods achieve.

\subsection{Main Results}

The central contribution of this work is the \textit{Multilevel Monte Carlo-qDRIFT (MLMC-qDRIFT)} algorithm and
its complexity analysis. To set notation for the multilevel construction,
we write $P_\ell$ for the qDRIFT observable estimator at refinement
level $\ell$, corresponding to a circuit of depth $N_\ell = N_0 \cdot
2^\ell$ for some base depth $N_0$.

We implement the algorithm on
different levels, where each level $\ell$ is realised by an independently
sampled quantum circuit of the corresponding depth. Our main results are as follows.

\textbf{Variance reduction via multilevel coupling.}
We construct a hierarchy of qDRIFT estimators $\{P_\ell\}_{\ell=0}^L$
at increasing levels of accuracy, and couple adjacent levels by sharing
their random gate sequences. We prove that this coupling causes the
variance of each level correction $P_\ell - P_{\ell-1}$ to decay
geometrically as $\mathcal{O}(2^{-\ell})$, a property we call the \emph{index-sharing
variance bound}, while the cost per sample grows at the complementary rate
$\mathcal{O}(2^\ell)$, corresponding to $N_\ell = N_0 \cdot 2^\ell$ gates per circuit.
This balance between variance decay and cost growth is
what enables the MLMC complexity improvement.

\textbf{Gate complexity.}
Applying the MLMC Complexity Theorem of~\cite{giles2015multilevel}
to the qDRIFT setting, we prove that MLMC-qDRIFT achieves a total gate
complexity of
\begin{equation}
    C_{\mathrm{MLMC}} = O\!\left(
      \frac{\lambda^2 t^2}{\varepsilon^2}
      \log^2 (1/\varepsilon)
    \right).
\end{equation}
This reduces the ${\varepsilon^{-3}}$ in \eqref{eq:qdrift_cost} scaling to ${\varepsilon^{-2}}$ up to logarithmic factors. 
 In addition, this complexity is still
independent of the number of Hamiltonian terms $M$, preserving the
key advantage of qDRIFT.

\textbf{Numerical validation.}
We validate the algorithm on a six-qubit Heisenberg XYZ spin chain,
confirming the predicted variance decay rates $\hat{\beta} \approx 0.92$
and the gate complexity crossover at target precision
$\varepsilon \approx 0.02$, beyond which MLMC-qDRIFT delivers
gate-count reductions of up to $28\times$ over standard qDRIFT at
$\varepsilon = 10^{-4}$.

\subsection{Related Work}

Since the introduction of qDRIFT~\cite{campbell2019random}, randomized
methods for quantum simulation have developed along several complementary
directions.  One line of work has focused on understanding the accuracy
of random product formulas beyond their averaged-channel behavior.  In
particular, concentration results show that typical random realizations
can approximate the ideal evolution with high probability, with improved
guarantees in state-dependent settings~\cite{chen2021concentration}.
More recent work has sharpened qDRIFT error bounds for both closed and
open quantum systems~\cite{david2025tighter}.  Randomization has also
been extended beyond time-independent Hamiltonian dynamics, including
continuous qDRIFT for time-dependent Hamiltonians~\cite{berry2020time}
and qDRIFT-type randomized schemes for Lindblad dynamics and thermal
state preparation~\cite{chen2025randomized}.

A second direction concerns the design of more efficient randomized
simulation protocols.  Randomized product formulas improve the analysis
of product-formula simulation by randomizing the ordering of Hamiltonian
terms~\cite{childs2019fasterquantum}, while randomized multiproduct
formulas combine random sampling with higher-order cancellation
mechanisms~\cite{faehrmann2022randomizingmulti}.  Within the qDRIFT
family, high-order randomized compilers such as qSWIFT improve the
precision scaling by using ancillary systems and classical
post-processing~\cite{nakaji2024high}.  Other approaches optimize the
sampling distribution itself: importance sampling generalizes qDRIFT by
allowing arbitrary sampling probabilities while controlling both bias and
statistical fluctuations~\cite{kiss2023importance}, and stochastic
Hamiltonian sparsification randomly removes weak Hamiltonian terms to
reduce implementation cost for electronic-structure Hamiltonians
~\cite{ouyang2020compilation}.  More recently, Fan, Wu, and
Zhang~\cite{fan2025adaptive} proposed an adaptive random compiler in
which the sampling weights are updated using low-order moment information.
These sampling-optimization methods are orthogonal to MLMC-qDRIFT: they
modify the single-level randomized estimator, whereas the multilevel
construction reduces sampling cost by coupling estimators across circuit
depths.

Hybrid deterministic--randomized decompositions provide another route to
reducing simulation cost.  In these methods, large or structurally
important Hamiltonian terms are treated deterministically, typically by
Trotter--Suzuki formulas, while the remaining numerous small terms are
sampled randomly.  This idea appears in partially random Trotter
algorithms~\cite{jin2025partially}, composite simulation
channels~\cite{hagan2023composite}, and composite qDRIFT-product
formulas for both real- and imaginary-time evolution~\cite{pocrnic2024}.
Such methods exploit the same basic observation as qDRIFT: when a
Hamiltonian contains many terms of unequal importance, random sampling can
avoid applying every term at every time step.

Random sampling has also become relevant in algorithms designed for
near-term, early fault-tolerant, and fault-tolerant quantum computation.
While fully coherent Hamiltonian simulation methods based on linear
combinations of unitaries, truncated Taylor series, quantum signal
processing, and qubitization achieve excellent asymptotic precision
scaling~\cite{childs2012,berry2015taylor,low2017,Low2019hamiltonian},
they typically require block encodings, ancilla registers, and coherent
control.  By contrast, randomized approaches trade coherent circuit depth
for classical sampling and statistical estimation.  This tradeoff has
been explored in randomized and statistical variants of phase estimation,
where random sampling can reduce dependence on the number of Hamiltonian
terms or allow algorithmic errors to be suppressed by collecting more
samples rather than increasing circuit depth
~\cite{kivlichan2019phase,wan2022randomized,gunther2025phase}.
qDRIFT-style randomized compilation has also begun to appear inside
larger quantum-simulation workflows.  For example, Piccinelli
et al.~\cite{piccinelli2025sqdrift} combine qDRIFT with sample-based
Krylov quantum diagonalization for quantum chemistry.  Such applications
motivate variance-reduction methods that reduce the repeated-sampling
overhead of randomized time-evolution subroutines.  These developments
reinforce the broader role of randomization as a resource-saving
mechanism in quantum simulation.

Finally, some recent near-term protocols seek to remove discretization
bias altogether for observable estimation.  For example, the method of
Granet and Dreyer constructs an unbiased estimator with bounded average
circuit depth, at the price of an attenuation factor that increases the
number of measurement shots required~\cite{granet2024hamiltonian}.  This
is conceptually related to the present work in that both approaches trade
circuit depth against statistical sampling.  However, our focus is
different: we retain the qDRIFT discretization hierarchy and reduce the
Monte Carlo cost by coupling estimators across levels through a multilevel
Monte Carlo construction.

In total gate complexity, the closest prior work is
qFLO~\cite{watson2024randomly}, which also reduces the standard qDRIFT
scaling by applying Richardson extrapolation to qDRIFT estimates at
multiple step sizes.  The mechanism, however, is different: qFLO targets
the bias expansion in the step size, whereas MLMC-qDRIFT targets the
sampling variance of a hierarchy of coupled estimators.  Related
extrapolation-based strategies have also been developed for product-formula
simulation more broadly; for example, Watson and
Watkins~\cite{watson2025trotter} analyze Trotter error mitigation by
Richardson extrapolation and polynomial interpolation for time-evolved
expectation values.  These works further illustrate the distinction
between bias-reduction mechanisms based on extrapolation and the present
MLMC variance-reduction mechanism.  The qFLO complexity bound relies on
sufficient smoothness of the observable as a function of the step size
over the extrapolation range; its analysis establishes this structure
near $\tau=0$, while an end-to-end extrapolation guarantee requires
control on the whole interval used by the extrapolator.  This smoothness
issue is precisely what is made explicit in the Lindblad extension
~\cite{mohammadipour2025reducing}.  By contrast, MLMC-qDRIFT uses only
the first-order qDRIFT bias bound and the variance decay of the level
differences.  The two approaches are therefore complementary rather than
competing: Richardson-type bias cancellation can, in principle, be
combined with multilevel variance reduction, as in classical multilevel
Richardson--Romberg methods~\cite{lemaire2017multilevel}.

We introduce the Multilevel Monte Carlo framework for qDRIFT and prove its computational advantage in Section~\ref{sec:mlmc}, then characterize the regime in which this advantage holds in Section~\ref{sec:complexity}. Section \ref{sec:numerics} provides numerical validation of our theoretical results, and we conclude the paper in Section \ref{sec:discussion}. Further technical details of the MLMC methodology and supplementary numerical analyses are provided in the appendices.

\section{Multilevel Monte Carlo for qDRIFT}
\label{sec:mlmc}

As alluded to in \cref{subsec:standard_qdrift}, standard qDRIFT requires a
total gate count of $\mathcal{O}(\varepsilon^{-3})$ to estimate
$P = \operatorname{Tr}(O\,e^{-iHt}\rho\,e^{iHt})$ to precision
$\varepsilon$, with the bias and variance contributions controlled by
independent parameters $N$ and $n$. The key structural observation
motivating the multilevel approach is that the randomised dynamics of
qDRIFT, each circuit driven by a randomly sampled sequence of
Hamiltonian terms, is analogous to a classical SDE driven by random
noise, for which ~\cite{giles2015multilevel} showed that
correlating estimators across a hierarchy of discretization levels
reduces the total cost to $\mathcal{O}(\varepsilon^{-2}\log^2\varepsilon^{-1})$, which is the primary motivation of the current work.
In this section we develop the quantum analogue of this construction,
derive the corresponding variance bounds, and establish the resulting
gate complexity.

\subsection{The MLMC Estimator}
Rather than a single sequence of random Hamiltonians, 
our MLMC-qDRIFT approach operates by constructing a hierarchy of approximation levels indexed
by an integer $\ell \in \{0, 1, \dots, L\}$, where $\ell = 0$ is the
coarsest and $\ell = L$ the finest, chosen based on the desired precision. The levels are defined according to the qDRIFT step angle
\begin{equation}
    \tau_\ell = \frac{\lambda t}{N_\ell}, \qquad
    N_\ell = N_0 \cdot 2^\ell .
\end{equation}
Here $\tau_\ell$ includes the total Hamiltonian weight. Each sampled gate is
$e^{-i\tau_\ell H_j}$ and thus the physical evolution time remains $t$. In particular,  $N_\ell$
determines the circuit depth at level $\ell$, so that finer levels correspond to
smaller step sizes and higher accuracy. The transition for level $\ell$ to level $\ell+1$ amounts to reducing the step size by half. It is also worth noting that the step size has taken into account the total Hamiltonian weight $\lambda$.

At each level $\ell$, we define the
random variable corresponding to the expectation of an observable $O$,
\begin{equation}
    P_\ell := \operatorname{Tr}\!\bigl(O\,U_\ell\,\rho\,U_\ell^\dagger\bigr),
    \label{eq:P_ell_def}
\end{equation}
where $U_\ell$ is a qDRIFT circuit of depth $N_\ell = N_0 \cdot 2^\ell$
driven by a randomly sampled index sequence, 
\begin{equation}\label{j-seq}
    \mathbf{j}= (j_1, j_2, \cdots, j_{N_\ell}), 
\end{equation}
for the Hamiltonians $H_j$ with probability proportional to $\abs{h_j}$,
so that
$\mathbb{E}[P_\ell] \to \operatorname{Tr}(O\,e^{-iHt}\rho\,e^{iHt})$
as $\ell \to \infty$. Rather than relying on a single high-accuracy
simulation at $\ell = L$, MLMC distributes the computational burden
across this hierarchy, concentrating most samples at low-cost coarse
levels, via the telescoping identity
\begin{equation}
    \mathbb{E}[P_L]
    = \mathbb{E}[P_0]
      + \sum_{\ell=1}^{L} \mathbb{E}[P_\ell - P_{\ell-1}],
    \label{eq:telescoping}
\end{equation}
where each correction $\mathbb{E}[P_\ell - P_{\ell-1}]$, associated with
step size $\tau_\ell $, can be estimated
using \emph{shared randomness} between adjacent levels to dramatically
reduce its variance.
Each level $\ell$ is characterised by a gate count $N_\ell = N_0 \cdot
2^\ell$, where $N_0$ is the base step count and each level doubles the
circuit depth. 
For the  independent realization $\mathbf{j}$ at level $\ell$,
determined by a random index in \eqref{j-seq} with  
i.i.d.\ distribution from $\{p_i\}_{i=1}^M$, the level estimator is
\begin{equation}
    P_\ell^{(\bj,\ell)}
    = \operatorname{Tr}\!\bigl(O\,U_\ell^{(\bj,\ell)}\,\rho\,
      U_\ell^{(\bj,\ell)\dagger}\bigr),
    \quad
    U_\ell^{(\bj,\ell)}
    = \prod_{k=1}^{N_\ell}
      \exp\!\Bigl(-i\,\frac{\lambda t}{N_\ell}\,H_{j_k}\Bigr).
    \label{eq:level_estimator}
\end{equation}

Consequently, the full MLMC estimator is
\begin{equation}
    \widehat{Y}
    = \underbrace{\frac{1}{n_0}\sum_{n=1}^{n_0}
        P_0^{(\bj_n,0)}}_{\text{coarse base}}
    + \sum_{\ell=1}^{L}
      \underbrace{
        \frac{1}{n_\ell}\sum_{n=1}^{n_\ell}
        Y_\ell^{(\bj_n,\ell)}
      }_{\text{level-}\ell\text{ correction}},
    \label{eq:mlmc_estimator}
\end{equation}
where $Y_\ell^{(\bj,\ell)} := P_\ell^{(\bj,\ell)} - P_{\ell-1}^{(\tilde \bj,\ell)}$
and $n_\ell$ is the number of independent trajectories at level $\ell$.
Equivalently, 
\begin{equation}
\widehat{Y} = \frac{1}{n_0}\sum_{n=1}^{n_0} P_0^{(\bj_n,0)}
+ \sum_{\ell=1}^{L} \frac{1}{n_\ell}\sum_{n=1}^{n_\ell} Y_\ell^{(\bj_n,\ell)}.
\end{equation}

Our notion $(\bj, \ell)$ emphasizes that the sequence $\bj$ has length depending on the level $\ell$. More importantly, $(\tilde \bj,\ell)$ corresponds to a sub-sampling of $\bj$ at level $\ell$, which will be elaborated upon next.  For each sample $\bj$ at level $\ell \geq 1$, the coupled pair
$(P_\ell^{(\bj,\ell)}, P_{\ell-1}^{(\bj,\ell)})$ is computed using the
\emph{same} underlying index sequence $(j_1,\dots,j_{N_\ell})$ to
maximize their correlation and so create a reduction in
$\operatorname{Var}(P_\ell - P_{\ell-1})$. The estimator $\widehat{Y}$
is unbiased for $\mathbb{E}[P_L]$, which converges to
$\operatorname{Tr}(O\,e^{-iHt}\rho\,e^{iHt})$ as $N_L \to \infty$.

In contrast, if the fine and coarse estimators at level $\ell$ were sampled
independently, $\operatorname{Var}(P_\ell - P_{\ell-1})
= \operatorname{Var}(P_\ell) + \operatorname{Var}(P_{\ell-1})$,
offering no advantage over standard Monte Carlo. The variance reduction
comes entirely from the coupling: by driving both estimators with the
same random index sequence, the correction variance
$V_\ell = \operatorname{Var}(P_\ell - P_{\ell-1})$ often decays geometrically
as $N_\ell$ grows, allowing fewer samples at each
successive level. The explicit bound on $V_\ell$ is established in
Lemma~\ref{lemma:index_sharing} below for the qDRIFT, after the coupling is defined;
the optimal allocation formula for $n_\ell$ is then derived directly
from that bound.

\begin{center}
\hrule
\vspace{1mm}
\textbf{Algorithm: Multilevel Monte Carlo qDRIFT}
\vspace{1mm}
\hrule
\begin{algorithmic}[1]
\STATE \textbf{Input:} Hamiltonian $\{H_j, h_j\}_{j=1}^M$, time $t$,
       error $\varepsilon$, levels $L$, base gate count $N_0$
\STATE \textbf{Initialise:} $\lambda = \sum_{j=1}^{M} |h_j|$,\;
       $p_j = |h_j|/\lambda$,\;
       $N_\ell = N_0 \cdot 2^\ell$,\;
       $\tau_\ell = \lambda t / N_\ell$ \quad for $\ell = 0,\dots,L$
\FOR{level $\ell = 0$ \TO $L$}
    \STATE Determine optimal sample count $n_\ell$ for level $\ell$
           via~\eqref{eq:optimal_allocation}
    \FOR{$n = 1$ \TO $n_\ell$}
        \STATE Sample index sequence
               $\mathbf{j}_n = (j_{n,1}, j_{n,2}, \dots, j_{n,N_\ell})$
               i.i.d.\ from $\{p_j\}_{j=1}^M$
        \STATE Compute fine unitary:
               $U_\ell^{(\mathbf{j}_n,\ell)} \gets
               \prod_{k=1}^{N_\ell}\exp\!\bigl(-i\,\tau_\ell\,H_{j_{n,k}}\bigr)$
        \STATE $P_\ell^{(\mathbf{j}_n,\ell)} \gets
               \operatorname{Tr}\!\bigl(O\,U_\ell^{(\mathbf{j}_n,\ell)}\,
               \rho\,U_\ell^{(\mathbf{j}_n,\ell)\dagger}\bigr)$
        \IF{$\ell > 0$}
            \STATE Obtain $\tilde{\mathbf{j}}_n =
                   (j_{n,1},\, j_{n,3},\, \dots,\, j_{n,N_\ell - 1})$
                   by sub-sampling every other index from $\mathbf{j}_n$
                   (index-sharing; see \cref{subsec:index_sharing})
            \STATE Compute coarse unitary:
                   $U_{\ell-1}^{(\tilde{\mathbf{j}}_n,\ell)} \gets
                   \prod_{k=1}^{N_{\ell-1}}
                   \exp\!\bigl(-i\,\tau_{\ell-1}\,H_{j_{n,2k-1}}\bigr)$
            \STATE $P_{\ell-1}^{(\tilde{\mathbf{j}}_n,\ell)} \gets
                   \operatorname{Tr}\!\bigl(O\,U_{\ell-1}^{(\tilde{\mathbf{j}}_n,\ell)}\,
                   \rho\,U_{\ell-1}^{(\tilde{\mathbf{j}}_n,\ell)\dagger}\bigr)$
            \STATE $Y_\ell^{(n)} \gets
                   P_\ell^{(\mathbf{j}_n,\ell)} - P_{\ell-1}^{(\tilde{\mathbf{j}}_n,\ell)}$
        \ELSE
            \STATE $Y_0^{(n)} \gets P_0^{(\mathbf{j}_n,0)}$
        \ENDIF
    \ENDFOR
    \STATE $\widehat{\Delta E}_\ell \gets
           \dfrac{1}{n_\ell}\sum_{n=1}^{n_\ell} Y_\ell^{(n)}$
\ENDFOR
\STATE \textbf{Output:}
       $\widehat{Y} = \sum_{\ell=0}^{L} \widehat{\Delta E}_\ell
       \approx \operatorname{Tr}(O\,e^{-iHt}\rho\,e^{iHt})$
\end{algorithmic}
\hrule
\end{center}

\subsection{Index-Sharing Coupling}
\label{subsec:index_sharing}

We now define the coupling precisely. Let $\tau_0$ be the coarsest step size; 
the step-size hierarchy is then
\begin{equation}
    \tau_\ell := \tau_0\,2^{-\ell},
    \qquad \ell = 0, 1, \dots, L,
    \label{eq:stepsize}
\end{equation}
with $N_\ell = \lambda t/\tau_\ell = N_0\,2^\ell$, where $N_0 = \lambda t/\tau_0$ 
is the corresponding coarsest gate count. Each
increase in level doubles the circuit depth and halves the time-step duration.
The framework generalises to any integer doubling factor.

It suffices to explain the coupling scheme for two successive levels $\ell-1$ and $\ell$.
Because the coarse step duration $\tau_{\ell-1} = 2\tau_\ell$ is twice
the fine step duration, each coarse step corresponds to exactly two fine
steps. For the $k$-th coarse step ($k = 1,\dots,N_{\ell-1}$), two
i.i.d.\ indices $j_{n,2k-1}$ and $j_{n,2k}$ are drawn from
$\{p_j\}_{j=1}^M$. The fine and coarse levels process these shared
indices by utilizing the following simple strategy
\begin{itemize}
    \item \textit{Fine level:} applies both indices as two sequential
          gates of duration $\tau_\ell$:
          \begin{equation}
              U_\ell^{(\mathbf{j}_n,\ell)}
              = \prod_{k=1}^{N_{\ell-1}}
                e^{-i\tau_\ell H_{j_{n,2k}}}\,
                e^{-i\tau_\ell H_{j_{n,2k-1}}}.
              \label{eq:fine_unitary}
          \end{equation}
    \item \textit{Coarse level:} discards the even index and applies
          only the odd index for the full coarse duration
          $\tau_{\ell-1} = 2\tau_\ell$:
          \begin{equation}
              U_{\ell-1}^{(\tilde{\mathbf{j}}_n,\ell)}
              = \prod_{k=1}^{N_{\ell-1}}
                e^{-i\tau_{\ell-1} H_{j_{n,2k-1}}}.
              \label{eq:coarse_unitary}
          \end{equation}
\end{itemize}
The shared odd indices $\tilde{\mathbf{j}}_n = (j_{n,1}, j_{n,3}, \dots,
j_{n,N_\ell-1})$ create a strong correlation between the two levels
while preserving the correct qDRIFT marginal distribution at the coarse
level. Although the two block evolutions do not agree pathwise, their
leading difference is a sum of mean-zero local fluctuations, whose
accumulated mean-square size decreases with the step size. This is the
mechanism behind the variance bound below.

\begin{figure}[htbp]
    \centering
    \begin{tikzpicture}[
        element/.style={circle, minimum size=8mm, inner sep=1pt,
          font=\large},
        level label/.style={font=\bfseries, anchor=east},
        arrow/.style={->, thick, shorten >=2pt, shorten <=2pt, >=stealth}]
        \node[level label] at (0, 2)
          {Level $\ell$ ($N_\ell = N_0\,2^\ell$): $\mathbf{j}_n$ };
        \node[element] (j1)    at (1.5,  2) {$j_{n,1}$};
        \node[element] (j2)    at (3.0,  2) {$j_{n,2}$};
        \node[element] (j3)    at (4.5,  2) {$j_{n,3}$};
        \node[element] (j4)    at (6.0,  2) {$j_{n,4}$};
        \node[element] (dots1) at (7.5,  2) {$\cdots$};
        \node[element] (jNm1)  at (9.0,  2) {$j_{n,N-1}$};
        \node[element] (jN)    at (10.5, 2) {$j_{n,N}$};
        \node[level label] at (0, 0)
          {Level $\ell-1$ ($N_{\ell-1} = N_0\,2^{\ell-1}$):
          $\tilde{\mathbf{j}}_n$};
        \node[element] (j1s)   at (1.5, 0) {$j_{n,1}$};
        \node[element] (j3s)   at (4.5, 0) {$j_{n,3}$};
        \node[element] (dots2) at (7.5, 0) {$\cdots$};
        \node[element] (jNm1s) at (9.0, 0) {$j_{n,N-1}$};
        \draw[arrow] (j1)   -- (j1s);
        \draw[arrow] (j3)   -- (j3s);
        \draw[arrow] (jNm1) -- (jNm1s);
    \end{tikzpicture}
    \caption{Index-sharing coupling. The full index sequence
      $\mathbf{j}_n = (j_{n,1}, j_{n,2}, \dots, j_{n,N_\ell})$ is used
      at fine level~$\ell$, while only the odd-indexed elements
      $\tilde{\mathbf{j}}_n = (j_{n,1}, j_{n,3}, \dots, j_{n,N_\ell-1})$
      are retained at coarser level~$\ell-1$, each applied for twice the
      step duration so that both levels simulate the same total time $t$.}
    \label{fig:sequence_hierarchy}
\end{figure}
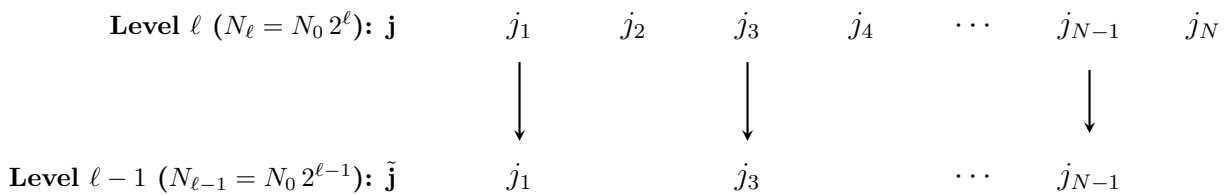

The efficiency of MLMC is governed by two competing rates: the decay of
the correction variance with the level $\ell$, and the growth of the
circuit cost with $\ell$. We first establish the variance decay for
the index-sharing coupling.

\begin{lemma}[Index-Sharing Variance Bound]
\label{lemma:index_sharing}
(Index-Sharing Variance Bound)
Let $P_\ell^{(\mathbf{j}_n,\ell)}$ and $P_{\ell-1}^{(\tilde{\mathbf{j}}_n,\ell)}$
be the random variables from the index-sharing coupling. Under
$\|H_j\| \le 1$ and $\|O\| \le 1$, the correction variance satisfies
\begin{equation}
    V_\ell
    =
    \operatorname{Var}\!\left(
        P_\ell^{(\mathbf{j}_n,\ell)}
        - P_{\ell-1}^{(\tilde{\mathbf{j}}_n,\ell)}
    \right)
    \le
    C_{\mathrm{IS}}\,\frac{\lambda^2t^2}{N_0}\,2^{-\ell},
    \label{eq:variance_bound}
\end{equation}
where $C_{\mathrm{IS}}$ is a universal constant independent of $\ell$,
$N_0$, and the number of Hamiltonian terms $M$.

\end{lemma}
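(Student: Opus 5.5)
The plan is to bound the variance by the second moment, $V_\ell \le \mathbb{E}\bigl[(P_\ell^{(\mathbf{j}_n,\ell)} - P_{\ell-1}^{(\tilde{\mathbf{j}}_n,\ell)})^2\bigr]$, and then control the pathwise difference with a hybrid (telescoping) argument over the $K := N_{\ell-1}$ coarse blocks. Write $\tau=\tau_\ell$. For block $k$ set $a=j_{n,2k-1}$ and $b=j_{n,2k}$. The fine block is $A_k = e^{-i\tau H_b}e^{-i\tau H_a}$ and the coarse block is $B_k = e^{-2i\tau H_a}$. For $m=0,\dots,K$, define the hybrid state $\rho^{(m)}$ obtained by applying the fine blocks $A_1,\dots,A_m$ followed by the coarse blocks $B_{m+1},\dots,B_K$. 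Then $P_\ell - P_{\ell-1} = \sum_{k=1}^K X_k$ with
\begin{equation*}
X_k = \operatorname{Tr}\!\bigl(O_k\,(A_k\rho_{k-1}A_k^\dagger - B_k\rho_{k-1}B_k^\dagger)\bigr),
\end{equation*}
where $\rho_{k-1}$ is $\rho$ evolved by the fine blocks $1,\dots,k-1$, and $O_k$ is $O$ evolved in the Heisenberg picture by the coarse blocks $k+1,\dots,K$. A second-order expansion, using $\|H_j\|\le 1$ and $\|O\|\le 1$, gives
\begin{equation*}
X_k = -i\tau\,\operatorname{Tr}\!\bigl(O_k[H_b-H_a,\rho_{k-1}]\bigr) + R_k, \qquad |R_k|\le c\,\tau^2 .
\end{equation*}

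The remainders contribute at most $\bigl(\sum_k |R_k|\bigr)^2 \le c^2K^2\tau^4 = \mathcal{O}(\lambda^2t^2\,\tau^2)$. Since $K\tau=\lambda t/2$, this equals $\mathcal{O}(\lambda^4t^4/N_\ell^2)$, which is of the claimed order $\lambda^2t^2/N_\ell$ once $\lambda t/N_\ell \lesssim 1$. The leading terms $L_k := -i\tau\operatorname{Tr}(O_k[H_b-H_a,\rho_{k-1}])$ carry the cancellation. Given everything outside block $k$, the operators $O_k$ and $\rho_{k-1}$ are fixed. Moreover $\mathbb{E}[H_b - H_a]=0$, because $a$ and $b$ are i.i.d.\ from $\{p_j\}$. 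Hence $\mathbb{E}[L_k \mid \text{blocks}\ne k]=0$. The diagonal terms therefore give $\sum_k \mathbb{E}[L_k^2] \le 4K\tau^2 = \mathcal{O}(\lambda t\,\tau)=\mathcal{O}(\lambda^2t^2/N_\ell)$, which is exactly the rate $2^{-\ell}/N_0$ in \eqref{eq:variance_bound}.

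The main obstacle is the cross terms $\mathbb{E}[L_kL_m]$ for $k<m$. The sum is not a clean martingale. $L_k$ depends on block $m$ through $O_k$, although only through the odd index, since the coarse level uses only $j_{n,2m-1}$. Likewise $L_m$ depends on block $k$ through $\rho_{m-1}$. My plan is a decoupling step. I would condition on all indices except those of block $m$, and replace block $m$ inside $O_k$ by an independent copy. Because $B_m$ is within $\mathcal{O}(\tau)$ of the identity, this replacement changes $L_k$ by at most $\mathcal{O}(\tau^2)$. After the replacement, $L_k$ is independent of block $m$, so the conditional mean of $L_m$ vanishes and the product has zero mean. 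The residual is $|\mathbb{E}[L_kL_m]| \le \mathcal{O}(\tau^2)\cdot\mathcal{O}(\tau)=\mathcal{O}(\tau^3)$. Summed over the $K^2$ pairs, this gives $\mathcal{O}(K^2\tau^3)=\mathcal{O}(\lambda^2t^2\,\tau)=\mathcal{O}(\lambda^3t^3/N_\ell)$, which again decays like $2^{-\ell}$.

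If this decoupling proves too lossy, I would fall back on splitting $H_b - H_a = (H_b-\bar H)+(\bar H-H_a)$, with $\bar H=H/\lambda$. The first piece involves only the fine-only index $j_{n,2k}$, so it forms a genuine martingale difference with respect to the filtration that reveals the even indices last. The second piece would be handled by the two-sided decoupling above. Collecting the diagonal terms, the cross terms and the remainders yields $V_\ell \le C\,\lambda^2t^2/N_\ell$, with $C$ independent of $M$. In the regime of interest, where $\lambda t$ and $\lambda t/N_0$ are $\mathcal{O}(1)$, the powers of $\lambda t$ beyond the second that arise in the cross-term and remainder bounds are absorbed into $C_{\mathrm{IS}}$; otherwise those powers must be tracked explicitly in the constant.
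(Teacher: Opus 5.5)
Your hybrid telescoping $P_\ell-P_{\ell-1}=\sum_{k=1}^K X_k$, the second-order expansion, and the observation $\mathbb{E}[L_k\mid\text{blocks }\neq k]=0$ are all correct. However, the argument as you carry it out does not prove the lemma. The lemma asserts a universal $C_{\mathrm{IS}}$, so the entire $\lambda t$-dependence must sit in the prefactor $\lambda^2t^2$; this is what produces the $t^2\lambda^2$ in the complexity theorem. Your accounting instead gives $V_\ell\lesssim \lambda^2t^2/N_\ell+\lambda^3t^3/N_\ell+\lambda^4t^4/N_\ell^2$. You close the gap only by assuming $\lambda t=\mathcal{O}(1)$, which is not a hypothesis of the lemma and is not the qDRIFT regime (the paper's example has $\lambda t\approx 11.5$). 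The step that fails is the one-sided decoupling of the cross terms. Replacing block $m$ only inside $O_k$ leaves $|\mathbb{E}[L_kL_m]|\le\mathcal{O}(\tau^2)\cdot\mathcal{O}(\tau)$, and $K^2\tau^3=\lambda^3t^3/(4N_\ell)$ carries an extra factor $\lambda t$ that no constant can absorb. Your fallback split does not cure this, because the $\bar H-H_{a_k}$ piece inherits the same look-ahead through $O_k$. The remainder step is also misstated: $\lambda^4t^4/N_\ell^2\le\lambda^2t^2/N_\ell$ needs $N_\ell\ge\lambda^2t^2$, not $\lambda t/N_\ell\lesssim 1$.

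The missing idea is to decouple on \emph{both} sides, using the fact that $L_k$ is exactly linear in $H_{b_k}-H_{a_k}$ with a coefficient that does not depend on block $k$. Write $L_k=L_k^{(m)}+\delta_k$ and $L_m=L_m^{(k)}+\delta_m$. Here $L_k^{(m)}$ uses an independent copy of $a_m$ inside $O_k$, and $L_m^{(k)}$ uses an independent copy of block $k$ inside $\rho_{m-1}$, so $|\delta_k|,|\delta_m|=\mathcal{O}(\tau^2)$. Averaging over block $k$ kills $\mathbb{E}[L_k^{(m)}L_m^{(k)}]$ and $\mathbb{E}[\delta_kL_m^{(k)}]$, and averaging over block $m$ kills $\mathbb{E}[L_k^{(m)}\delta_m]$. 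Hence $|\mathbb{E}[L_kL_m]|=|\mathbb{E}[\delta_k\delta_m]|=\mathcal{O}(\tau^4)$, and the cross terms total $\mathcal{O}(\lambda^4t^4/N_\ell^2)$. This and the remainder are then absorbed by the trivial bound $V_\ell\le 4$: if $N_\ell<\lambda^2t^2$ then $4<4\lambda^2t^2/N_\ell$, and otherwise $\lambda^4t^4/N_\ell^2\le\lambda^2t^2/N_\ell$.

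The paper avoids cross terms altogether by working with the state error $e_m=|\psi^f_m\rangle-|\psi^c_m\rangle$. It uses $|P_\ell-P_{\ell-1}|\le 2\|e_K\|$ and the forward recursion $e_m=F_me_{m-1}+d_m$, where $F_m$ is the fine block and $\|d_m\|\le 2\tau$. Because $e_{m-1}$ is determined by past blocks, the identity $\|e_m\|^2=\|e_{m-1}\|^2+2\operatorname{Re}\langle e_{m-1},F_m^\dagger d_m\rangle+\|d_m\|^2$ needs only two facts. The first is $\|\mathbb{E}[F_m^\dagger d_m\mid\mathcal{F}_{m-1}]\|\le 8\tau^2$, which holds because the Duhamel integrand has mean-zero leading part $H_{b_m}-H_{a_m}$. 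The second is the crude bound $\|e_{m-1}\|\le 2$. Together these give $\mathbb{E}\|e_K\|^2\le 36K\tau^2$ and $C_{\mathrm{IS}}=72$, with no case split. Your Heisenberg-picture terms are not adapted to any filtration, since $O_k$ looks ahead into future coarse indices, and that is exactly what creates the cross terms. The state-level route also directly supplies the moment bound $\mathbb{E}\|e_K\|^2=\mathcal{O}(\tau_\ell)$ that is assumed later for the augmented estimator.
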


\begin{proof}[Proof sketch]
The proof proceeds by first bounding the correction variance in terms of
the mean-square difference between the coupled fine and coarse unitaries.
Duhamel's principle is then used to express the local unitary difference
over each coarse block. Under the index-sharing coupling, the leading
local fluctuations have zero mean and accumulate in mean square over the
$N_{\ell-1}$ coarse blocks, yielding a total bound proportional to
$N_{\ell}^{-1} = N_0^{-1} 2^{-\ell}$. The complete proof is provided in
\cref{app:proof_of_lemma}.
\end{proof}

With the variance bound established, we now determine the optimal sample
allocation. For notational uniformity, set $Y_0^{(n)} = P_0^{(\mathbf{j}_n,0)}$
and $V_0 = \operatorname{Var}(Y_0^{(n)})$. Let
\begin{equation}
    C_0 := N_0,
    \qquad
    C_\ell := N_\ell + N_{\ell-1},
    \quad \ell \ge 1,
    \label{eq:level_costs}
\end{equation}
denote the gate cost of one sample at each level. The counts $n_\ell$
are chosen to minimize the total expected gate cost
\[
    \sum_{\ell=0}^L n_\ell C_\ell
\]
subject to the variance budget
\[
    \sum_{\ell=0}^L \frac{V_\ell}{n_\ell}
    \le \frac{\varepsilon^2}{2}.
\]
Applying the Cauchy--Schwarz inequality to this constrained
optimization~\cite{giles2015multilevel} gives
\begin{equation}
    n_\ell
    =
    \left\lceil
        \frac{2}{\varepsilon^2}
        \sqrt{\frac{V_\ell}{C_\ell}}
        \sum_{\ell'=0}^{L}
        \sqrt{V_{\ell'} C_{\ell'}}
    \right\rceil,
    \qquad
    \ell = 0,\dots,L.
    \label{eq:optimal_allocation}
\end{equation}
In practice, the variances $V_\ell$ may be estimated from a small pilot
run before the main computation. In the theoretical analysis below, we
use the analytical bound~\eqref{eq:variance_bound} for $\ell \ge 1$.

\begin{theorem}[Complexity of Index-Sharing MLMC-qDRIFT]
\label{thm:index_sharing_complexity}
Let
\[
    P = \operatorname{Tr}\!\left(O\,e^{-iHt}\rho\,e^{iHt}\right),
\]
and assume $\|H_j\| \le 1$ and $\|O\| \le 1$. Choose the finest level
$L$ so that the qDRIFT bias satisfies
\begin{equation}
    \bigl|\mathbb{E}[P_L] - P\bigr|
    \le \frac{\varepsilon}{\sqrt{2}},
    \label{eq:bias_budget}
\end{equation}
and choose the sample counts $n_\ell$ according
to~\eqref{eq:optimal_allocation}, so that
\begin{equation}
    \sum_{\ell=0}^{L}\frac{V_\ell}{n_\ell}
    \le \frac{\varepsilon^2}{2}.
    \label{eq:variance_budget}
\end{equation}
Then the MLMC estimator $\widehat{Y}$ of~\eqref{eq:mlmc_estimator}
satisfies
\begin{equation}
    \mathbb{E}\!\left[(\widehat{Y} - P)^2\right]^{1/2}
    \le \varepsilon.
    \label{eq:mse_bound}
\end{equation}
Moreover, under the index-sharing variance bound~\eqref{eq:variance_bound},
the expected total gate count obeys
\begin{equation}
    \mathbb{E}[C]
    =
    \mathcal{O}\!\left(
        t^2\lambda^2\,
        \varepsilon^{-2}
        \log^2(1/\varepsilon)
    \right).
    \label{eq:complexity_bound}
\end{equation}
\end{theorem}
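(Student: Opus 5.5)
The argument has two parts: the mean-square error bound via a bias--variance split, and the cost bound by inserting Lemma~\ref{lemma:index_sharing} into the allocation \eqref{eq:optimal_allocation}.

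\textbf{MSE bound.} Samples at different levels are independent, and the telescoping identity \eqref{eq:telescoping} holds for the coupled pairs. The reason is that the coarse estimator $P_{\ell-1}^{(\tilde{\mathbf{j}}_n,\ell)}$ uses i.i.d.\ indices from $\{p_j\}$, so it has the same law as $P_{\ell-1}^{(\mathbf{j}_n,\ell-1)}$. Hence $\mathbb{E}[\widehat{Y}]=\mathbb{E}[P_L]$ and $\operatorname{Var}(\widehat{Y})=\sum_\ell V_\ell/n_\ell$. This gives
\[
\mathbb{E}[(\widehat{Y}-P)^2]=\operatorname{Var}(\widehat{Y})+(\mathbb{E}[P_L]-P)^2\le \tfrac{\varepsilon^2}{2}+\tfrac{\varepsilon^2}{2},
\]
using \eqref{eq:variance_budget} and \eqref{eq:bias_budget}. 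I also need to check that \eqref{eq:optimal_allocation} actually enforces \eqref{eq:variance_budget}. Dropping the ceiling gives $V_\ell/n_\ell\le \frac{\varepsilon^2}{2}\sqrt{V_\ell C_\ell}/\sum_{\ell'}\sqrt{V_{\ell'}C_{\ell'}}$, and summing over $\ell$ gives exactly $\varepsilon^2/2$.

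\textbf{Cost bound.} The ceiling adds at most one sample per level, so
\[
\mathbb{E}[C]=\sum_\ell n_\ell C_\ell\le \frac{2}{\varepsilon^2}\Bigl(\sum_{\ell=0}^L\sqrt{V_\ell C_\ell}\Bigr)^2+\sum_{\ell=0}^L C_\ell.
\]
\begin{itemize}
\item \emph{Levels $\ell\ge1$.} The costs are $C_\ell=\tfrac32 N_0 2^\ell$, and Lemma~\ref{lemma:index_sharing} gives $V_\ell\le C_{\mathrm{IS}}\lambda^2t^2 2^{-\ell}/N_0$. So $\sqrt{V_\ell C_\ell}\le\sqrt{\tfrac32 C_{\mathrm{IS}}}\,\lambda t$, independent of $\ell$. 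This is the $\beta=\gamma$ case of Giles' theorem, and these levels contribute $O(L\lambda t)$ to the sum.
\item \emph{Level $0$.} Here $V_0\le\|O\|^2\le 1$ and $C_0=N_0$, giving $\sqrt{V_0C_0}\le\sqrt{N_0}$.
\end{itemize}
I choose $N_0=\Theta(\lceil\lambda^2t^2\rceil)$, which is natural since the coarsest circuit should at least resolve one qDRIFT step of bounded angle. Then $\sqrt{V_0C_0}=O(\lambda t)$ as well, and
\[
\mathbb{E}[C]=O\bigl(\varepsilon^{-2}\lambda^2t^2(L+1)^2\bigr)+O(N_0 2^{L}).
\]

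\textbf{Choice of $L$.} By \eqref{eq:qdrift_bias}, $|\mathbb{E}[P_L]-P|\le 2\lambda^2t^2/(N_02^L)$. The smallest $L$ achieving \eqref{eq:bias_budget} satisfies $N_02^L=\Theta(\lambda^2t^2/\varepsilon)$. With $N_0=\Theta(\lambda^2t^2)$ this gives $L=\log_2(2\sqrt2/\varepsilon)+O(1)=O(\log(1/\varepsilon))$. The ceiling remainder is then $O(\lambda^2t^2/\varepsilon)$, which is dominated by the main term. Combining the pieces yields \eqref{eq:complexity_bound}.

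\textbf{Main obstacle.} The delicate point is the level-$0$ term and the resulting constants. If $N_0$ were fixed independently of $\lambda t$, one would get $\sqrt{V_0C_0}=O(\sqrt{N_0})$ and an $L$ of order $\log(\lambda^2t^2/\varepsilon)$. I would therefore state the choice $N_0\propto\lambda^2t^2$ explicitly and treat $\lambda t\gtrsim1$. Under that convention, $\log^2$ of $\lambda t$ does not appear in the bound, and everything else is routine bookkeeping.
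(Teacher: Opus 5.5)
Your proof is correct and follows the paper's argument. It uses the bias--variance split of the MSE, then the $\beta=\gamma=1$ cancellation, which makes each $\sqrt{V_\ell C_\ell}$ a constant multiple of $\lambda t$ and the Giles sum $O(L\lambda t)$ with $L=O(\log(1/\varepsilon))$. Your separate treatment of the level-$0$ term (with $N_0\propto\lambda^2t^2$) and of the ceiling remainder $\sum_\ell C_\ell=O(\lambda^2t^2/\varepsilon)$ makes explicit what the paper's appendix proof glosses over: it bounds all $L+1$ summands, including $\ell=0$, by the lemma's constant and drops the ceilings.
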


\begin{proof}[Proof sketch]
The MLMC estimator is unbiased for $\mathbb{E}[P_L]$, so its
mean-square error decomposes as
\begin{equation}
    \mathbb{E}\!\left[(\widehat{Y} - P)^2\right]
    =
    \bigl(\mathbb{E}[P_L] - P\bigr)^2
    +
    \sum_{\ell=0}^{L}\frac{V_\ell}{n_\ell}.
    \label{eq:mse_decomposition}
\end{equation}
The choices~\eqref{eq:bias_budget} and~\eqref{eq:variance_budget}
therefore imply~\eqref{eq:mse_bound}.

It remains to estimate the cost. For $\ell \ge 1$,
Lemma~\ref{lemma:index_sharing} gives $V_\ell = \mathcal{O}(2^{-\ell})$,
while the cost of one coupled correction sample is
\[
    C_\ell = N_\ell + N_{\ell-1}
    = \frac{3}{2}N_0\,2^\ell
    = \mathcal{O}(2^\ell).
\]
Thus the MLMC variance and cost exponents satisfy $\beta = \gamma = 1$.
Equivalently, each summand in the Giles sum
\begin{equation}
    \mathcal{S}
    :=
    \sum_{\ell=0}^{L}\sqrt{V_\ell C_\ell}
\end{equation}
is bounded uniformly in $\ell$, up to constants depending on $t\lambda$.
Hence $\mathcal{S} = \mathcal{O}(t\lambda L)$, and the optimal
allocation gives
\begin{equation}
    \mathbb{E}[C]
    \le
    \frac{2}{\varepsilon^2}\mathcal{S}^2
    =
    \mathcal{O}\!\left(
        t^2\lambda^2\,
        \varepsilon^{-2} L^2
    \right).
\end{equation}
Finally, the qDRIFT channel-norm bias bound gives
\[
    \bigl|\mathbb{E}[P_L] - P\bigr|
    =
    \mathcal{O}(N_L^{-1}),
\]
so choosing $L = \mathcal{O}(\log(1/\varepsilon))$ enforces
\eqref{eq:bias_budget}. Substituting this choice of $L$ yields
\eqref{eq:complexity_bound}. The complete proof is given in
\cref{app:MLMC_Complexity}.
\end{proof}

\subsection{Quantum Measurement of MLMC-qDRIFT Corrections}
\label{sec:augmented}

The preceding subsection established the MLMC variance reduction at the
level of exact pathwise observables. Namely, for each sampled qDRIFT
index sequence, the coupled fine and coarse circuits define a level
correction
\[
    Y_\ell^{(n)} = P_\ell^{(\mathbf{j}_n,\ell)} - P_{\ell-1}^{(\tilde{\mathbf{j}}_n,\ell)},
\]
and the index-sharing coupling makes the variance of this correction
decay with the level. On an actual quantum device, however, the
observable expectations $P_\ell^{(\mathbf{j}_n,\ell)}$ and
$P_{\ell-1}^{(\tilde{\mathbf{j}}_n,\ell)}$ are not available directly.
They must be estimated from measurement outcomes, which introduces an
additional source of variance.

This measurement issue is specific to the quantum implementation of
MLMC. In classical path sampling, one can evaluate the coupled path
correction directly once the two paths have been generated. In the
quantum setting, by contrast, the path correction is encoded in quantum
states and must be extracted through measurements. A naive measurement
strategy can therefore destroy the variance decay obtained from the
classical MLMC coupling. The purpose of this subsection is to introduce
an augmented-state construction that preserves the MLMC variance
reduction at the measurement level.

Throughout this subsection, we use the convention that the qDRIFT step
size $\tau_\ell$ already includes the factor $\lambda = \sum_j |h_j|$.
Thus a sampled qDRIFT gate at level $\ell$ is written as
\[
    e^{-i\tau_\ell H_j}.
\]
For a fixed level $\ell \ge 1$, write $\tau = \tau_\ell$, so that the
coarse step size is $\tau_{\ell-1} = 2\tau$.

\bigskip

\subsubsection{Coupled fine and coarse paths}

Fix a level $\ell \ge 1$, and let
\[
    K := N_{\ell-1}
\]
be the number of coarse steps. We sample a coupled index sequence
\begin{equation}
    \mathbf{j}_n = (j_{n,1}, j_{n,2}, \dots, j_{n,2K})
    \label{eq:index_seq}
\end{equation}
i.i.d.\ from the qDRIFT distribution $p_j = |h_j|/\lambda$. The fine
path uses all $2K = N_\ell$ sampled indices, while the coarse path uses
only the odd-indexed subsequence $\tilde{\mathbf{j}}_n = (j_{n,1},
j_{n,3}, \dots, j_{n,2K-1})$. For the $m$-th coarse block, define
\[
    a_m := j_{n,2m-1},
    \qquad
    b_m := j_{n,2m},
    \qquad
    m = 1,\dots,K.
\]
The fine and coarse block propagators are
\begin{equation}
    U_{f,m}
    :=
    e^{-i\tau H_{b_m}}e^{-i\tau H_{a_m}},
    \qquad
    U_{c,m}
    :=
    e^{-i(2\tau)H_{a_m}}.
    \label{eq:pair_propagators}
\end{equation}
Thus the fine path applies two sampled gates of step size $\tau$,
whereas the coarse path applies the first sampled Hamiltonian for the
doubled step size $2\tau$.

Starting from the common initial state
\[
    |\psi^{(\ell)}_0\rangle
    =
    |\psi^{(\ell-1)}_0\rangle
    =
    |\psi_0\rangle,
\]
we define the synchronised iterates
\begin{equation}
    |\psi^{(\ell)}_{2m}\rangle
    =
    U_{f,m}|\psi^{(\ell)}_{2m-2}\rangle,
    \qquad
    |\psi^{(\ell-1)}_m\rangle
    =
    U_{c,m}|\psi^{(\ell-1)}_{m-1}\rangle,
    \label{eq:iterates}
\end{equation}
for $m = 1,\dots,K$. After block $m$, the fine path has taken $2m$
steps and the coarse path has taken $m$ steps, corresponding to the
same physical simulation time.

\bigskip

\subsubsection{Why direct measurement loses the MLMC variance decay}

For a fixed sampled sequence $\mathbf{j}_n$, the level correction is
the difference of two quantum expectation values,
\begin{equation}
    Y_\ell^{(n)}
    :=
    \langle\psi^{(\ell)}_{2K}|O|\psi^{(\ell)}_{2K}\rangle
    -
    \langle\psi^{(\ell-1)}_K|O|\psi^{(\ell-1)}_K\rangle.
    \label{eq:Yell_def}
\end{equation}
At the level of exact expectations, the index-sharing coupling reduces
the variance of $Y_\ell^{(n)}$ as $\ell$ increases. On quantum
hardware, however, each expectation value must be inferred from
measurement outcomes, and those outcomes introduce shot noise.

To see the obstruction, suppose first that the fine and coarse circuits
are measured separately. For the simplicity of the illustration, let
$x_f$ and $x_c$ denote single-shot measurement outcomes of $O$ on the
fine and coarse states, respectively. Even when the two circuits are
driven by the same sampled index sequence $\mathbf{j}_n$, the
measurement outcomes are independent once the two circuits are measured
separately. Hence
\[
    \operatorname{Var}(x_f - x_c)
    =
    \operatorname{Var}(x_f) + \operatorname{Var}(x_c).
\]
For a bounded observable $O$, this variance is $\mathcal{O}(1)$, even
when the two expectation values are close and the ideal correction
$Y_\ell^{(n)}$ is small. Thus separate measurement destroys the
variance decay achieved by the MLMC coupling at the level of exact
pathwise expectations.

A natural next idea is to preserve the coupling during measurement by
running the two paths coherently in the same circuit. One can introduce
a control qubit and prepare a superposition in which the $|0\rangle$
branch follows the fine qDRIFT path and the $|1\rangle$ branch follows
the coupled coarse path:
\begin{equation}
    |\Phi^{(\ell)}_K\rangle
    =
    \frac{1}{\sqrt{2}}
    \left(
        |0\rangle_c \otimes |\psi^{(\ell)}_{2K}\rangle
        +
        |1\rangle_c \otimes |\psi^{(\ell-1)}_K\rangle
    \right).
    \label{eq:phi_before_meas}
\end{equation}
This construction correctly encodes the difference as a single
expectation value, since
\begin{equation}
    \langle\Phi^{(\ell)}_K|
    2(Z_c \otimes O)
    |\Phi^{(\ell)}_K\rangle
    =
    Y_\ell^{(n)}.
    \label{eq:ancilla_expectation}
\end{equation}
However, it still does not solve the shot-noise problem. If $O$ is a
Pauli observable, then $O^2 = I$, and therefore
\[
    \bigl[2(Z_c \otimes O)\bigr]^2 = 4I.
\]
Consequently,
\begin{equation}
    \operatorname{Var}\bigl(2(Z_c \otimes O)\bigr)
    =
    4 - Y_\ell^{(n)}{}^2
    =
    \mathcal{O}(1).
    \label{eq:bad_var}
\end{equation}
The mean $Y_\ell^{(n)}$ decreases at fine levels, but the single-shot
variance remains level independent, leading to a small
signal-to-noise issue. The correction is therefore hidden by a fixed
measurement noise floor.

\subsubsection{Augmented difference-state construction}

The key implementation idea of this work is to encode the level
correction directly into an augmented quantum state. Instead of
measuring $O$ on the fine and coarse states separately, or measuring a
fixed-norm difference observable on a controlled superposition, we
evolve the fine--coarse difference state alongside the coarse state. A
level-dependent scaling is then chosen so that the block observable used
to recover the correction has norm decreasing with $\ell$. This makes
the measurement shot noise decay at the same rate as the MLMC correction
variance.

Define the difference state at the $m$-th coarse synchronisation time by
\begin{equation}
    |e^{(\ell)}_m\rangle
    :=
    |\psi^{(\ell)}_{2m}\rangle
    -
    |\psi^{(\ell-1)}_m\rangle,
    \qquad
    |e^{(\ell)}_0\rangle = 0.
    \label{eq:error_state}
\end{equation}
Here we use $m$ to label the time stepping.
Using~\eqref{eq:iterates}, the pair
$(|e^{(\ell)}_m\rangle, |\psi^{(\ell-1)}_m\rangle)$ satisfies the exact
recursion
\begin{align}
    |e^{(\ell)}_m\rangle
    &=
    U_{f,m}|e^{(\ell)}_{m-1}\rangle
    +
    \bigl(U_{f,m} - U_{c,m}\bigr)
    |\psi^{(\ell-1)}_{m-1}\rangle,
    \label{eq:error_recursion}
    \\
    |\psi^{(\ell-1)}_m\rangle
    &=
    U_{c,m}|\psi^{(\ell-1)}_{m-1}\rangle.
    \label{eq:coarse_recursion}
\end{align}
The forcing term
\[
    \bigl(U_{f,m} - U_{c,m}\bigr)|\psi^{(\ell-1)}_{m-1}\rangle
\]
is the local mismatch between the coupled fine and coarse qDRIFT paths.

Let $\zeta_\ell > 0$ be a scaling parameter, which will be specified
later. We define the augmented state
\begin{equation}
    |\chi^{(\ell)}_m\rangle
    :=
    \begin{bmatrix}
        \zeta_\ell |e^{(\ell)}_m\rangle \\[1mm]
        |\psi^{(\ell-1)}_m\rangle
    \end{bmatrix}
    \in \mathbb{C}^{2d},
    \qquad
    |\chi^{(\ell)}_0\rangle
    =
    \begin{bmatrix}
        0 \\ |\psi_0\rangle
    \end{bmatrix},
    \label{eq:augmented_state}
\end{equation}
where $d$ is the dimension of the physical Hilbert space. If the
physical system is represented on $q$ qubits, then $d = 2^q$, and the
augmented state uses one additional block qubit.

Combining~\eqref{eq:error_recursion} and~\eqref{eq:coarse_recursion}
gives the block recursion
\begin{equation}
    |\chi^{(\ell)}_m\rangle
    =
    \mathcal{W}^{(\ell)}_m
    |\chi^{(\ell)}_{m-1}\rangle,
    \qquad
    \mathcal{W}^{(\ell)}_m
    =
    \begin{bmatrix}
        U_{f,m} & \zeta_\ell(U_{f,m} - U_{c,m}) \\
        0       & U_{c,m}
    \end{bmatrix}.
    \label{eq:augmented_recursion}
\end{equation}
The map $\mathcal{W}^{(\ell)}_m$ is generally not unitary because of
the off-diagonal block. Its dilation or block-encoding implementation
will be discussed separately; the present subsection focuses only on the
variance mechanism enabled by the augmented representation.

\bigskip

\subsubsection{Scaled block observable}

The augmented state recovers the MLMC correction as a single quadratic
form. Since
\[
    |\psi^{(\ell)}_{2K}\rangle
    =
    |\psi^{(\ell-1)}_K\rangle + |e^{(\ell)}_K\rangle,
\]
the correction~\eqref{eq:Yell_def} can be written as
\begin{equation}
    Y_\ell^{(n)}
    =
    \langle e^{(\ell)}_K|O|e^{(\ell)}_K\rangle
    +
    \langle e^{(\ell)}_K|O|\psi^{(\ell-1)}_K\rangle
    +
    \langle\psi^{(\ell-1)}_K|O|e^{(\ell)}_K\rangle.
    \label{eq:Yell_expanded}
\end{equation}
To express it as a single expectation, we define the Hermitian block
observable
\begin{equation}
    \widehat{O}_\ell
    :=
    \begin{bmatrix}
        \zeta_\ell^{-2}O & \zeta_\ell^{-1}O \\
        \zeta_\ell^{-1}O & 0
    \end{bmatrix}.
    \label{eq:block_observable}
\end{equation}
Then a direct calculation gives
\begin{equation}
    Y_\ell^{(n)}
    =
    \langle\chi^{(\ell)}_K|
    \widehat{O}_\ell
    |\chi^{(\ell)}_K\rangle.
    \label{eq:Yell_block}
\end{equation}
Thus the coupled correction is obtained from a single observable on the
augmented state. The important point is that the scaling $\zeta_\ell$
appears inversely in $\widehat{O}_\ell$. Choosing $\zeta_\ell$ large at
fine levels makes the measured observable small, while the augmented
state remains close to normalised.

\subsubsection{Norm control and shot-noise decay}

We choose
\begin{equation}
    \zeta_\ell = \frac{c}{\sqrt{\tau_\ell}},
    \qquad c > 0.
    \label{eq:zeta_choice}
\end{equation}
The following lemma records the two estimates that make the measurement
variance decay with the level.

\begin{lemma}[Scaled augmented moments and shot-noise bound]
\label{lemma:scaled_augmented_bounds}
Let $|e_K^{(\ell)}\rangle$ be the final fine--coarse difference state
generated by the index-sharing coupling at level $\ell$, and write
$\tau_\ell = \lambda t/N_\ell$. Suppose that the coupled difference
satisfies the moment bounds
\begin{equation}
    \mathbb{E}_{\mathbf{j}_n}\|e_K^{(\ell)}\|^2
    \le C_2\tau_\ell,
    \qquad
    \mathbb{E}_{\mathbf{j}_n}\|e_K^{(\ell)}\|^4
    \le C_4\tau_\ell^2,
    \label{eq:error_state_moment_bounds}
\end{equation}
where the constants $C_2$ and $C_4$ may depend on the fixed problem
parameter $\lambda t$, but not on the level $\ell$. Let $\zeta_\ell =
c/\sqrt{\tau_\ell}$, and define $|\chi_K^{(\ell)}\rangle$ and
$\widehat{O}_\ell$ by \eqref{eq:augmented_state}
and~\eqref{eq:block_observable}. Then
\begin{equation}
    \mathbb{E}_{\mathbf{j}_n} S_\ell(\mathbf{j}_n)
    \le 1 + c^2 C_2,
    \qquad
    S_\ell(\mathbf{j}_n) := \|\chi_K^{(\ell)}\|^2,
    \label{eq:expected_chi_norm_bound}
\end{equation}
and
\begin{equation}
    \mathbb{E}_{\mathbf{j}_n} S_\ell(\mathbf{j}_n)^2
    \le 1 + 2c^2C_2 + c^4C_4.
    \label{eq:expected_chi_norm_square_bound}
\end{equation}
Moreover,
\begin{equation}
    \|\widehat{O}_\ell\|
    \le
    \left(\zeta_\ell^{-1} + \zeta_\ell^{-2}\right)\|O\|
    =
    \mathcal{O}\!\left(\tau_\ell^{1/2}\right)\|O\|.
    \label{eq:Ohat_norm_bound}
\end{equation}
Consequently, the expected conditional shot-noise variance of the scaled
augmented estimator satisfies
\begin{equation}
    \mathbb{E}_{\mathbf{j}_n}
    \bigl[
        \operatorname{Var}_{\mathrm{shot}}(\mathbf{j}_n)
    \bigr]
    =
    \mathcal{O}(\tau_\ell)
    =
    \mathcal{O}(2^{-\ell}).
    \label{eq:expected_shot_noise_decay}
\end{equation}
\end{lemma}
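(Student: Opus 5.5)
The proof is mostly bookkeeping once the structure of $|\chi_K^{(\ell)}\rangle$ is written out. The plan is to compute $S_\ell$ exactly, take moments using the assumed bounds \eqref{eq:error_state_moment_bounds}, bound $\|\widehat{O}_\ell\|$ blockwise, and then control the shot-noise variance by the second moment of a bounded measurement outcome.

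\emph{Moment bounds.} Since the coarse iterate is produced by unitaries, $\|\psi^{(\ell-1)}_K\| = \|\psi_0\| = 1$. The two blocks of \eqref{eq:augmented_state} are orthogonal components of $\mathbb{C}^{2d}$, so
\[
    S_\ell(\mathbf{j}_n) = \zeta_\ell^2\|e_K^{(\ell)}\|^2 + 1 = 1 + \frac{c^2}{\tau_\ell}\|e_K^{(\ell)}\|^2 .
\]
Taking expectations and using $\mathbb{E}\|e_K^{(\ell)}\|^2 \le C_2\tau_\ell$ gives \eqref{eq:expected_chi_norm_bound}. Squaring,
\[
    S_\ell^2 = 1 + 2c^2\tau_\ell^{-1}\|e_K^{(\ell)}\|^2 + c^4\tau_\ell^{-2}\|e_K^{(\ell)}\|^4 .
\]
Applying both moment hypotheses term by term gives \eqref{eq:expected_chi_norm_square_bound}. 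The cancellation of $\tau_\ell$ here is exactly why $\zeta_\ell \propto \tau_\ell^{-1/2}$ was chosen.

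\emph{Operator norm.} Write $\widehat{O}_\ell = \zeta_\ell^{-2}\,(|0\rangle\langle 0| \otimes O) + \zeta_\ell^{-1}\,(X \otimes O)$. Since $\|X\otimes O\| = \|O\|$, the triangle inequality gives \eqref{eq:Ohat_norm_bound}. With $\zeta_\ell^{-1} = \sqrt{\tau_\ell}/c$ and $\tau_\ell \le \tau_0$ bounded, $\zeta_\ell^{-2} \le (\sqrt{\tau_0}/c)\,\zeta_\ell^{-1}$. Hence $\|\widehat{O}_\ell\| \le c'\sqrt{\tau_\ell}\,\|O\|$ with $c'$ independent of $\ell$.

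\emph{Shot-noise variance.} This is the one step needing care, because $\chi$ is unnormalised. I will fix the measurement model as follows: for a fixed $\mathbf{j}_n$, one prepares the normalised state $\chi/\|\chi\|$, measures $\widehat{O}_\ell$ in its eigenbasis, and rescales the outcome by $S_\ell(\mathbf{j}_n)$. The rescaling is needed because $\langle\chi|\widehat{O}_\ell|\chi\rangle = S_\ell \langle \hat\chi|\widehat{O}_\ell|\hat\chi\rangle$. The single-shot outcome is then $S_\ell\, o$, where $o$ is an eigenvalue of $\widehat{O}_\ell$, so $|o| \le \|\widehat{O}_\ell\|$. Bounding the conditional variance by the conditional second moment,
\[
    \operatorname{Var}_{\mathrm{shot}}(\mathbf{j}_n) \le S_\ell(\mathbf{j}_n)^2\,\|\widehat{O}_\ell\|^2 .
\]
Since $\|\widehat{O}_\ell\|$ is deterministic, taking $\mathbb{E}_{\mathbf{j}_n}$ gives
\[
    \mathbb{E}\operatorname{Var}_{\mathrm{shot}} \le (1+2c^2C_2+c^4C_4)\,c'^2\,\tau_\ell\,\|O\|^2 = \mathcal{O}(\tau_\ell).
\]
Finally, $\tau_\ell = \tau_0 2^{-\ell}$ gives \eqref{eq:expected_shot_noise_decay}.

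The main obstacle is this last step. It requires stating the measurement and normalisation model explicitly, including how $S_\ell$ is known or estimated; that is precisely why the fourth-moment hypothesis $C_4$ enters. The rest is direct substitution.
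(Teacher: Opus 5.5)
Your proposal is correct and follows the paper's proof essentially step for step: the exact identity $S_\ell = 1+\zeta_\ell^2\|e_K^{(\ell)}\|^2$ with termwise use of both moment hypotheses, a triangle-inequality bound on the block coefficients of $\widehat{O}_\ell$ (your form $\zeta_\ell^{-2}\,|0\rangle\langle 0|\otimes O + \zeta_\ell^{-1}X\otimes O$ is the paper's Kronecker factorisation written in Pauli terms), and the bound $\operatorname{Var}_{\mathrm{shot}} \le S_\ell^2\|\widehat{O}_\ell\|^2$ from measuring the normalised state and rescaling by $S_\ell$. Your explicit measurement model and the remark $\zeta_\ell^{-2} \le (\sqrt{\tau_0}/c)\,\zeta_\ell^{-1}$ only make precise what the paper leaves implicit.
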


\begin{proof}
Since $|\psi_K^{(\ell-1)}\rangle$ is normalised, the augmented norm is
\[
    S_\ell(\mathbf{j}_n)
    =
    \|\chi_K^{(\ell)}\|^2
    =
    1 + \zeta_\ell^2\|e_K^{(\ell)}\|^2.
\]
Taking expectation and using $\zeta_\ell^2 = c^2/\tau_\ell$ together
with \eqref{eq:error_state_moment_bounds}, we obtain
\[
    \mathbb{E}_{\mathbf{j}_n}S_\ell(\mathbf{j}_n)
    =
    1 + \frac{c^2}{\tau_\ell}
    \mathbb{E}_{\mathbf{j}_n}\|e_K^{(\ell)}\|^2
    \le
    1 + c^2C_2.
\]
Similarly,
\[
    S_\ell(\mathbf{j}_n)^2
    =
    1
    + 2\zeta_\ell^2\|e_K^{(\ell)}\|^2
    + \zeta_\ell^4\|e_K^{(\ell)}\|^4,
\]
and therefore
\[
    \mathbb{E}_{\mathbf{j}_n}S_\ell(\mathbf{j}_n)^2
    \le
    1 + 2c^2C_2 + c^4C_4.
\]
This proves \eqref{eq:expected_chi_norm_bound} and
\eqref{eq:expected_chi_norm_square_bound}.

For the observable bound, the block structure gives
\[
    \widehat{O}_\ell
    =
    \begin{bmatrix}
        \zeta_\ell^{-2} & \zeta_\ell^{-1} \\
        \zeta_\ell^{-1} & 0
    \end{bmatrix}
    \otimes O.
\]
Hence
\[
    \|\widehat{O}_\ell\|
    \le
    \left\|
    \begin{bmatrix}
        \zeta_\ell^{-2} & \zeta_\ell^{-1} \\
        \zeta_\ell^{-1} & 0
    \end{bmatrix}
    \right\|\|O\|
    \le
    \left(\zeta_\ell^{-2} + \zeta_\ell^{-1}\right)\|O\|.
\]
Since $\zeta_\ell = c/\sqrt{\tau_\ell}$, this gives
$\|\widehat{O}_\ell\| = \mathcal{O}(\tau_\ell^{1/2})\|O\|$.

It remains to bound the shot noise. The normalised augmented state is
\[
    |\widetilde{\chi}_K^{(\ell)}\rangle
    =
    \frac{|\chi_K^{(\ell)}\rangle}{\sqrt{S_\ell(\mathbf{j}_n)}}.
\]
The correction is recovered as
\[
    Y_\ell^{(n)}
    =
    S_\ell(\mathbf{j}_n)
    \langle\widetilde{\chi}_K^{(\ell)}|
    \widehat{O}_\ell
    |\widetilde{\chi}_K^{(\ell)}\rangle.
\]
Therefore the conditional shot variance of the scaled measurement is
bounded by
\[
    \operatorname{Var}_{\mathrm{shot}}(\mathbf{j}_n)
    \le
    S_\ell(\mathbf{j}_n)^2
    \|\widehat{O}_\ell\|^2.
\]
Taking expectation over the sampled index sequence and using
\eqref{eq:expected_chi_norm_square_bound} and
\eqref{eq:Ohat_norm_bound}, we obtain
\[
    \mathbb{E}_{\mathbf{j}_n}
    \bigl[
        \operatorname{Var}_{\mathrm{shot}}(\mathbf{j}_n)
    \bigr]
    \le
    \mathbb{E}_{\mathbf{j}_n}S_\ell(\mathbf{j}_n)^2
    \|\widehat{O}_\ell\|^2
    =
    \mathcal{O}(\tau_\ell).
\]
Since $\tau_\ell = \tau_0 2^{-\ell}$, this is $\mathcal{O}(2^{-\ell})$.
\end{proof}

\medskip

\subsubsection{Near-unitary structure and dilation implementation}
\label{subsec:near_unitary_dilation}

The augmented-state construction has two consequences. First, it makes
the measurement variance decay with the MLMC level, as required for the
complexity theorem. Second, it turns the level correction into a
near-unitary non-Hermitian evolution, whose non-unitary part is small
and has the same Pauli-sum structure as the original sampled Hamiltonian
terms. We now make both points explicit.

The augmented state is generally not exactly normalised. Define
\begin{equation}
    S_\ell(\mathbf{j}_n)
    :=
    \|\chi^{(\ell)}_K\|^2,
    \qquad
    |\widetilde{\chi}^{(\ell)}_K\rangle
    :=
    \frac{|\chi^{(\ell)}_K\rangle}
         {\sqrt{S_\ell(\mathbf{j}_n)}}.
    \label{eq:normalised_chi}
\end{equation}
Then the block-observable identity~\eqref{eq:Yell_block} becomes
\begin{equation}
    Y_\ell^{(n)}
    =
    S_\ell(\mathbf{j}_n)
    \,
    \langle\widetilde{\chi}^{(\ell)}_K|
    \widehat{O}_\ell
    |\widetilde{\chi}^{(\ell)}_K\rangle.
    \label{eq:Yell_normalised}
\end{equation}
By Lemma~\ref{lemma:scaled_augmented_bounds}, the scaled augmented norm
has bounded second moment in expectation over the sampled qDRIFT
sequence, while
\[
    \|\widehat{O}_\ell\|
    =
    \mathcal{O}(\tau_\ell^{1/2}).
\]
Consequently,
\[
    \mathbb{E}_{\mathbf{j}_n}
    [\operatorname{Var}_{\mathrm{shot}}(\mathbf{j}_n)]
    =
    \mathcal{O}(\tau_\ell)
    =
    \mathcal{O}(2^{-\ell}).
\]
Thus the augmented state is almost normalised, and the observable used
to extract the correction becomes smaller at finer levels. Consequently,
the conditional single-shot measurement variance satisfies
\begin{equation}
    \operatorname{Var}_{\mathrm{shot}}
    \le
    S_\ell(\mathbf{j}_n)^2\,
    \|\widehat{O}_\ell\|^2
    =
    \mathcal{O}(\tau_\ell)
    =
    \mathcal{O}(2^{-\ell}).
    \label{eq:shot_var_decays}
\end{equation}
The measurement variance therefore decays at the same rate as the MLMC
correction variance from the index-sharing coupling. Including both
qDRIFT path randomness and shot noise, the total per-level variance
remains
\begin{equation}
    \operatorname{Var}_{\mathrm{total},\ell}
    =
    \mathcal{O}(2^{-\ell}),
    \label{eq:total_var}
\end{equation}
which preserves the $\beta = 1$ variance-decay condition used in the
MLMC complexity theorem.

\medskip

The same estimates also explain why the augmented dynamics is a
near-unitary object. The state norm is preserved up to
$\mathcal{O}(\tau_\ell)$ over the full coupled trajectory, while the
off-diagonal block in the augmented update
\[
    \mathcal{W}^{(\ell)}_m
    =
    \begin{bmatrix}
        U_{f,m} & \zeta_\ell(U_{f,m} - U_{c,m}) \\
        0       & U_{c,m}
    \end{bmatrix}
\]
has size
\[
    \|\zeta_\ell(U_{f,m} - U_{c,m})\|
    =
    \mathcal{O}(\tau_\ell^{1/2}),
\]
because $U_{f,m} - U_{c,m} = \mathcal{O}(\tau_\ell)$ and $\zeta_\ell =
c/\sqrt{\tau_\ell}$. Hence the augmented update is a small non-unitary
perturbation of a block-diagonal unitary update.

\medskip

We emphasise that the block update $\mathcal{W}^{(\ell)}_m$ in
\eqref{eq:augmented_recursion} is not, by itself, a near-term quantum
circuit. It is an algebraic representation of the coupled fine--coarse
difference dynamics. In particular, because of its upper triangular
off-diagonal block, $\mathcal{W}^{(\ell)}_m$ is generally nonunitary
and cannot be implemented directly as a deterministic gate sequence on
the physical register. The purpose of introducing
$\mathcal{W}^{(\ell)}_m$ is instead to expose the small nonunitary
correction that must be embedded or measured. Next, we show that this
correction has a Hermitian dilation structure with Pauli-sum components
of the same type as the original sampled Hamiltonian terms.
Specifically, within the $m$-th coarse block, let
\[
    \Delta H_m := H_{b_m} - H_{a_m} = H_{j_{n,2m}} - H_{j_{n,2m-1}}.
\]
On the first half-step, the fine and coarse paths both evolve under
$H_{a_m} = H_{j_{n,2m-1}}$. In the qDRIFT time variable $s \in
[0,\tau]$, the augmented state satisfies
\begin{equation}
    \frac{d}{ds}
    |\chi^{(\ell)}_m(s)\rangle
    =
    -i
    \begin{bmatrix}
        H_{j_{n,2m-1}} & 0 \\
        0       & H_{j_{n,2m-1}}
    \end{bmatrix}
    |\chi^{(\ell)}_m(s)\rangle.
    \label{eq:augmented_first_half}
\end{equation}
This part is exactly unitary. On the second half-step, the fine path
evolves under $H_{b_m} = H_{j_{n,2m}}$, while the coarse path continues
under $H_{a_m} = H_{j_{n,2m-1}}$. The augmented state then satisfies
\begin{equation}
    \frac{d}{ds}
    |\chi^{(\ell)}_m(s)\rangle
    =
    -i
    \begin{bmatrix}
        H_{j_{n,2m}} & \zeta_\ell\Delta H_m \\
        0       & H_{j_{n,2m-1}}
    \end{bmatrix}
    |\chi^{(\ell)}_m(s)\rangle,
    \qquad
    s \in [0,\tau].
    \label{eq:augmented_second_half}
\end{equation}
Equivalently, using the local variable $r = s/\tau \in [0,1]$,
\begin{equation}
    \frac{d}{dr}
    |\chi^{(\ell)}_m(r)\rangle
    =
    -i\tau
    \begin{bmatrix}
        H_{j_{n,2m}} & \zeta_\ell\Delta H_m \\
        0       & H_{j_{n,2m-1}}
    \end{bmatrix}
    |\chi^{(\ell)}_m(r)\rangle.
    \label{eq:augmented_second_half_rescaled}
\end{equation}
The factor multiplying the off-diagonal block is
\[
    \tau\zeta_\ell = c\,\tau^{1/2},
\]
so the non-unitary part of each local step is small at fine levels.

To isolate this non-unitary part, decompose the block generator in
\eqref{eq:augmented_second_half_rescaled} into Hermitian and
anti-Hermitian contributions. Define
\begin{equation}
    \mathcal{H}^{(\ell)}_m
    :=
    \tau
    \begin{bmatrix}
        H_{j_{n,2m}} & \frac{\zeta_\ell}{2}\Delta H_m \\[1mm]
        \frac{\zeta_\ell}{2}\Delta H_m & H_{j_{n,2m-1}}
    \end{bmatrix},
    \label{eq:H_eff_augmented}
\end{equation}
and
\begin{equation}
    \mathcal{K}^{(\ell)}_m
    :=
    \frac{\tau\zeta_\ell}{2}
    \begin{bmatrix}
        0 & -i\Delta H_m \\[1mm]
        i\Delta H_m & 0
    \end{bmatrix}.
    \label{eq:K_eff_augmented}
\end{equation}
Both $\mathcal{H}^{(\ell)}_m$ and $\mathcal{K}^{(\ell)}_m$ are
Hermitian, and the second-half evolution can be written as
\begin{equation}
    \frac{d}{dr}
    |\chi^{(\ell)}_m(r)\rangle
    =
    \left(
        -i\mathcal{H}^{(\ell)}_m
        +
        \mathcal{K}^{(\ell)}_m
    \right)
    |\chi^{(\ell)}_m(r)\rangle.
    \label{eq:nonhermitian_decomposition}
\end{equation}
Since $\|H_{j_{n,2m-1}}\| \le 1$, $\|H_{j_{n,2m}}\| \le 1$, and
$\|\Delta H_m\| \le 2$, the non-Hermitian component obeys
\begin{equation}
    \|\mathcal{K}^{(\ell)}_m\|
    \le
    \tau\zeta_\ell
    =
    c\,\tau^{1/2}
    =
    \mathcal{O}(\tau_\ell^{1/2}).
    \label{eq:K_small_augmented}
\end{equation}
Thus the generator is a Hermitian Hamiltonian evolution plus a small
Hermitian gain/loss term. This provides another view of why the
augmented dynamics remains close to norm-preserving.

\medskip

The non-Hermitian form~\eqref{eq:nonhermitian_decomposition} can be
converted back to a unitary or measurement-based evolution on an
enlarged Hilbert space using standard dilation ideas. In the simplest
postselected dilation schemes, one introduces an additional ancilla
qubit and represents the non-Hermitian flow generated by
$-i\mathcal{H}^{(\ell)}_m + \mathcal{K}^{(\ell)}_m$ as a block of a
larger unitary evolution whose Hermitian generator contains an ancilla
coupling proportional to $\sigma_Y \otimes \mathcal{K}^{(\ell)}_m$
\cite{liu2021probabilistic,mao2023measurement}. Measurement-based
realisations of such nonunitary evolutions have been demonstrated as
near-term protocols, for example in imaginary-time evolution
\cite{mao2023measurement}. In the present setting, the same mechanism
applies at the level of each augmented qDRIFT correction block: the
desired nonunitary update is recovered by measuring or postselecting the
dilation ancilla. Since $\|\mathcal{K}^{(\ell)}_m\| =
\mathcal{O}(\tau_\ell^{1/2})$, the nonunitary component becomes
perturbatively small on fine levels.

More generally, the moment-matching dilation framework of
Ref.~\cite{li2025momentmatching} provides a systematic way to embed
linear non-Hermitian dynamics of the form
\[
    \dot{x} = (-i\mathcal{H} + \mathcal{K})x
\]
into a unitary evolution on an enlarged space with tunable
encoding--evolution--evaluation maps. Such dilations can be used to
improve the postselection success probability and to control the
overhead associated with implementing the augmented qDRIFT correction.
This makes the present construction compatible with recent unitary
embedding approaches for general non-unitary linear dynamics.

Finally, the dilation preserves the algebraic structure needed for
qDRIFT-type implementation. For electronic-structure Hamiltonians and
spin models, the sampled Hamiltonian terms $H_{j_{n,2m-1}}$ and
$H_{j_{n,2m}}$ are Pauli strings, so
\[
    \Delta H_m = H_{j_{n,2m}} - H_{j_{n,2m-1}}
\]
is a short Pauli sum. The Hermitian and non-Hermitian parts above can be
written explicitly using the block-qubit Pauli operators as
\begin{align}
    \mathcal{H}^{(\ell)}_m
    &=
    \tau\left[
        |0\rangle\!\langle 0|_b \otimes H_{j_{n,2m}}
        +
        |1\rangle\!\langle 1|_b \otimes H_{j_{n,2m-1}}
        +
        \frac{\zeta_\ell}{2}
        X_b \otimes \Delta H_m
    \right],
    \label{eq:H_eff_pauli_structure}
    \\
    \mathcal{K}^{(\ell)}_m
    &=
    \frac{\tau\zeta_\ell}{2}
    Y_b \otimes \Delta H_m.
    \label{eq:K_eff_pauli_structure}
\end{align}
Here $b$ denotes the block qubit distinguishing the error and coarse
components. A further dilation only adds Pauli operators on the
dilation ancilla, such as $\sigma_Y$. Therefore the dilated Hamiltonians
remain sums of Pauli strings. In this sense, the augmented-state and
dilation construction does not leave the qDRIFT setting: it enlarges
the register and modifies the sampled Pauli terms, but the
implementation primitives remain Hamiltonian terms of the same type as
those used in the original randomised simulation.

\section{Gate Complexity and Crossover Analysis}
\label{sec:complexity}

We summarize the gate-complexity consequences of the MLMC-qDRIFT
construction developed in \cref{sec:mlmc} and identify the precision
regime in which the multilevel method becomes advantageous.  Full
derivations are given in \cref{app:complexity}.

Throughout this section we assume $\|O\|\le 1$, so a single-shot
measurement of $O$ has variance at most one.  We denote by
$\sigma^2\le 1$ the actual single-sample variance of standard qDRIFT,
including both qDRIFT path randomness and measurement shot noise.

We use the qDRIFT bias bound
\begin{equation}
    \bigl|\mathbb{E}[P_N]-P\bigr|
    \le
    \frac{B}{N},
    \qquad
    B = \mathcal{O}(t^2 \lambda^2),
    \label{eq:bias_constant}
\end{equation}
which follows from the diamond-norm bound
$\|\mathcal{E}^N - \mathcal{U}(t)\|_\diamond \le 2\lambda^2t^2/N$
of~\cite{campbell2019random}, giving bias decay exponent $\alpha = 1$.
As before, we split the mean-square error budget evenly:
$\mathrm{Bias}^2 \le \varepsilon^2/2$ and $\mathrm{Var} \le \varepsilon^2/2$.

\subsection{Standard qDRIFT}

The bias condition fixes the circuit depth to
$N_{\mathrm{std}} = \lceil \sqrt{2}B/\varepsilon \rceil$,
and the variance condition requires
$n_{\mathrm{std}} = \lceil 2\sigma^2/\varepsilon^2 \rceil$
independent samples.  The total gate count is therefore
\begin{equation}
    C_{\mathrm{std}}(\varepsilon)
    =
    N_{\mathrm{std}}\,n_{\mathrm{std}}
    =
    \mathcal{O}(\varepsilon^{-3}),
    \label{eq:std_cost}
\end{equation}
arising from the $\mathcal{O}(\varepsilon^{-1})$ depth needed to control
bias multiplied by the $\mathcal{O}(\varepsilon^{-2})$ samples needed to
control variance.

\subsection{MLMC-qDRIFT}

The index-sharing coupling gives total level-$\ell$ variance
$V_\ell \le A/N_\ell$ for $\ell \ge 1$, with cost per sample
$C_\ell = \frac{3}{2}N_\ell$, so the variance and cost exponents satisfy
$\beta = \gamma = 1$.  This is the boundary case of the MLMC complexity
theorem~\cite{giles2008multilevel}, which yields a logarithmic
correction (see \cref{app:complexity} for the full derivation):
\begin{equation}
    C_{\mathrm{MLMC}}(\varepsilon)
    =
    \mathcal{O}\!\left(\varepsilon^{-2}\log^2(1/\varepsilon)\right).
    \label{eq:mlmc_cost}
\end{equation}

\subsection{Crossover}
\label{sec:crossover}

The asymptotic improvement over standard qDRIFT is
\begin{equation}
    \frac{C_{\mathrm{std}}(\varepsilon)}{C_{\mathrm{MLMC}}(\varepsilon)}
    =
    \mathcal{O}\!\left(\frac{\varepsilon^{-1}}{\log^2(1/\varepsilon)}\right),
    \label{eq:speedup}
\end{equation}
so MLMC-qDRIFT is eventually cheaper for sufficiently small $\varepsilon$.
However, the multilevel estimator carries a coarse-level overhead that
can delay the crossover precision $\varepsilon^*$ at which the improved
scaling becomes visible, particularly for systems with large effective
one-norm $\lambda t$.  A detailed analysis of the two limiting crossover
regimes and the dependence of $\varepsilon^*$ on the problem constants
$A$, $B$, $N_0$, $V_0$, and $\sigma^2$ is given in
\cref{app:crossover}; the numerical crossover observed in
\cref{sec:numerics} is consistent with these predictions.

\section{Numerical Experiments}
\label{sec:numerics}

We validate the theoretical contributions of the paper through exact
and randomised numerical simulation on a concrete quantum spin system.
Three figures are presented. Figure~\ref{fig:qdrift_variance} examines
the variance and mean decay of the MLMC estimator using the averaged
qDRIFT channel, validating the $O(N_\ell^{-1})$ bias structure.
Figure~\ref{fig:shot_noise} samples actual random qDRIFT circuits via
the index-sharing coupling and
validates Lemma~\ref{lemma:index_sharing} by confirming that the
shot-noise variance of the augmented estimator of
Section~\ref{sec:augmented} decays as $O(\tau_\ell) = O(2^{-\ell})$.
Figure~\ref{fig:gate_count} validates the gate complexity improvement
of Theorem~\ref{thm:index_sharing_complexity}. Two complementary
computational approaches are used, and it is important to distinguish
them clearly. The first computes the level-$\ell$ measurement
probability $p_\ell = \operatorname{Tr}(\Pi_+
\mathcal{E}_\ell^{N_\ell}[\rho_0])$ exactly by iterating the averaged
qDRIFT channel via density-matrix evolution; no random circuits are
sampled and no measurement outcomes are drawn. Here we consider the
Pauli observable $O = Z_0$, and
\[
    \Pi_+ := \frac{I + O}{2}.
\]
Thus $p_\ell$ is the probability of obtaining the $+1$ outcome when
measuring $O$ after the averaged qDRIFT channel at level $\ell$. The
MLMC statistics for Figures~\ref{fig:qdrift_variance}
and~\ref{fig:gate_count} are then derived analytically from these exact
$p_\ell$ values via the Bernoulli
formulas~\eqref{eq:var_fine}--\eqref{eq:mean_diff}. The second
approach, used for Figure~\ref{fig:shot_noise}, samples actual random
qDRIFT circuits via the index-sharing coupling of
Section~\ref{subsec:index_sharing} and directly computes the shot-noise
variance of the augmented estimator of Section~\ref{sec:augmented} on
each realisation.

All experiments are performed on the one-dimensional Heisenberg XYZ
spin chain with $n = 6$ qubits,
\begin{equation}
    H \;=\; \sum_{j=0}^{n-2}
        \Bigl(
          J_x\,X_j X_{j+1}
          + J_y\,Y_j Y_{j+1}
          + J_z\,Z_j Z_{j+1}
        \Bigr),
    \qquad
    J_x = 1.0,\quad J_y = 0.5,\quad J_z = 0.8,
    \label{eq:heisenberg}
\end{equation}
with evolution time $T = 1$. The Hamiltonian decomposes into $M = 15$
Hermitian two-body terms with one-norm $\lambda \approx 11.5$, giving
$\lambda^2T^2 \approx 132$. The reference value $P_{\mathrm{exact}} =
\langle Z_0\rangle_T \approx 0.5024$ is obtained by exact matrix
exponentiation of $H$, with corresponding measurement probability
$p_\infty \approx 0.7512$. Since the single-shot measurement of $O =
Z_0$ yields a $\pm 1$ Bernoulli outcome with parameter $p_\ell$, the
single-shot variance at level $\ell$ is $\sigma^2 =
4p_\ell(1-p_\ell)$, which converges to $4p_\infty(1-p_\infty) \approx
0.748$ as $\ell \to \infty$, specialising the general bound $\sigma^2
\leq \|O\|^2 = 1$ of Section~\ref{sec:complexity}.

\subsection{MLMC Variance and Mean Decay}
\label{sec:var_decay}

\begin{figure}[htp]
    \centering
    \includegraphics[width=\linewidth]{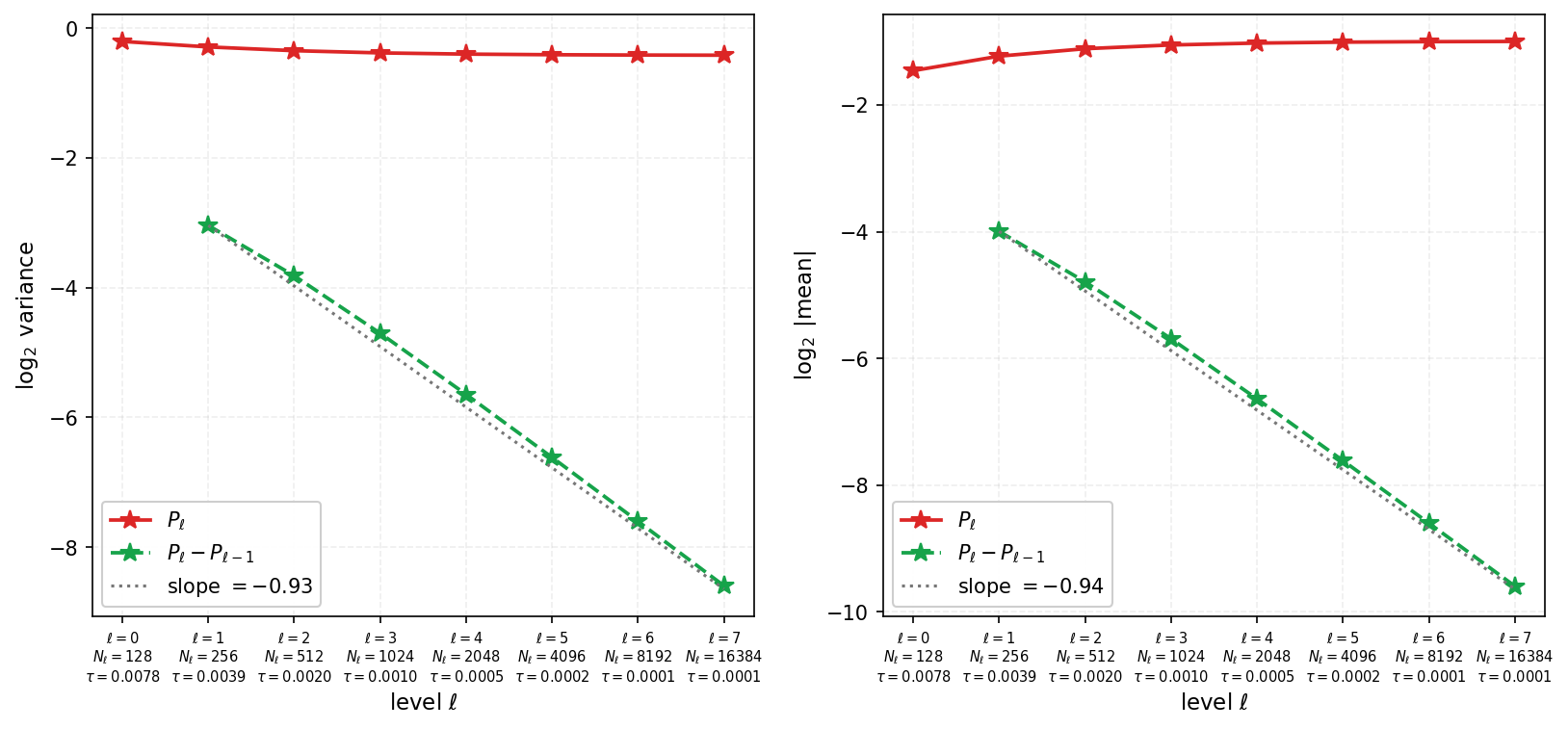}
    \caption{MLMC variance (left) and absolute mean (right) for the
      qDRIFT estimator on the 6-qubit Heisenberg XYZ
      chain~\eqref{eq:heisenberg}, plotted on a $\log_2$ scale against
      MLMC level $\ell$. All quantities are computed analytically from
      the exact channel probabilities $p_\ell$ via the Bernoulli
      formulas~\eqref{eq:var_fine}--\eqref{eq:mean_diff}; no random
      circuits are sampled. Red solid lines show the single-level
      estimator $P_\ell$; green dashed lines show the coupled correction $\Delta P_\ell = P_\ell - P_{\ell-1}$, where $p_\ell$ and $p_{\ell-1}$ are computed from the index-sharing qDRIFT channel and the difference is evaluated via the correlated Bernoulli formula~\eqref{eq:var_diff}; dotted lines show the least-squares
      reference slope. The flat red line confirms that
      $\operatorname{Var}(P_\ell) = 4p_\ell(1-p_\ell) \approx 0.75$
      is level-independent, the shot-noise floor of a single projective
      measurement of $Z_0$. The decaying green dashed line has fitted
      rate $\hat{\beta} \approx 0.93$, consistent with the
      $O(N_\ell^{-1})$ qDRIFT bias bound $|p_\ell - p_{\ell-1}| =
      O(\lambda^2T^2/N_\ell)$; this decay reflects the averaged channel
      structure rather than the index-sharing coupling, which is
      validated in Figure~\ref{fig:shot_noise}. The right panel shows
      the absolute mean decaying at fitted rate $\hat{\alpha} \approx
      0.94$, confirming first-order bias cancellation in the MLMC
      telescoping sum~\eqref{eq:telescoping}.}
    \label{fig:qdrift_variance}
\end{figure}

Figure~\ref{fig:qdrift_variance} shows $\log_2$ of the per-level
variance (left panel) and absolute mean (right panel) for the qDRIFT
estimator. All four quantities are computed analytically from the exact
$p_\ell$ values, no random circuits are sampled and no measurement
outcomes are drawn:
\begin{align}
    \operatorname{Var}(P_\ell)
        &= 4\,p_\ell(1-p_\ell), \label{eq:var_fine} \\
    |\mathbb{E}[P_\ell]|
        &= |2p_\ell - 1|, \label{eq:mean_fine} \\
    \operatorname{Var}(\Delta P_\ell)
        &= 4|p_\ell - p_{\ell-1}|
         \cdot (1 - |p_\ell - p_{\ell-1}|), \label{eq:var_diff} \\
    |\mathbb{E}[\Delta P_\ell]|
        &= 2|p_\ell - p_{\ell-1}|. \label{eq:mean_diff}
\end{align}
The correction formula~\eqref{eq:var_diff} corresponds to a maximally
coupled Bernoulli measurement of two levels. If $X_\ell, X_{\ell-1}
\in \{\pm 1\}$ have success probabilities $p_\ell$ and $p_{\ell-1}$
and are coupled via the same uniform random variable $u \sim
\mathcal{U}[0,1]$, then $\mathbb{P}(X_\ell \neq X_{\ell-1}) =
|p_\ell - p_{\ell-1}|$, so $(X_\ell - X_{\ell-1})^2 = 4$ with
probability $|p_\ell - p_{\ell-1}|$ and $0$ otherwise, giving
\begin{equation*}
    \operatorname{Var}(X_\ell - X_{\ell-1})
    = 4|p_\ell - p_{\ell-1}|\bigl(1 - |p_\ell - p_{\ell-1}|\bigr).
\end{equation*}
This coupling acts only at the level of measurement outcomes; it does
not involve the random index sequences $\mathbf{j}$ that drive the
qDRIFT circuits. A more complete treatment, which couples the two
levels coherently through a shared circuit and measures their difference
in a single shot, is provided by the augmented state method of
Section~\ref{sec:augmented}, validated in
Section~\ref{sec:shot_noise}.

The coarsest level is set to $N_0 = 128 \approx \lambda^2T^2 \approx
132$, placing the simulation at the threshold of the asymptotic
$O(N_\ell^{-1})$ convergence regime. The gate counts $N_\ell = N_0
\cdot 2^\ell$ span $\{128, 256, \ldots, 16384\}$ for $\ell \in
\{0,\ldots,7\}$.

Two properties are verified numerically. First, the analytically
computed single-level variance $\operatorname{Var}(P_\ell) =
4p_\ell(1-p_\ell) \approx 0.75$ is flat across all levels (red solid
line, left panel): since $p_\ell \to p_\infty$ monotonically, the
Bernoulli variance converges to $4p_\infty(1-p_\infty) \approx 0.748$
and is level-independent. This is the shot-noise floor that any single
projective measurement of $Z_0$ would incur, regardless of circuit
depth. Second, the analytically computed correction variance (green
dashed line, left panel) decays geometrically with fitted rate
$\hat{\beta} = 0.93$, consistent with the $O(N_\ell^{-1})$ qDRIFT
bias bound, which gives $|p_\ell - p_{\ell-1}| = O(\lambda^2T^2/
N_\ell)$. The absolute mean of the correction (green dashed line, right
panel) decays at fitted rate $\hat{\alpha} = 0.94$, confirming
first-order bias cancellation in the MLMC telescoping
sum~\eqref{eq:telescoping}. The slight departures of $\hat{\beta}$ and
$\hat{\alpha}$ from unity are attributable to the finite regression
window of eight levels and do not affect the asymptotic complexity bound.

It is important to note that this decay is a consequence of the averaged
channel structure — specifically that $|p_\ell - p_{\ell-1}|$ shrinks
as $N_\ell$ grows — and does not by itself validate
Lemma~\ref{lemma:index_sharing}, which concerns the stochastic variance
$V_\ell = \operatorname{Var}_{\mathbf{j}}[Y_\ell(\mathbf{j})]$ across
random index sequences and requires the index-sharing coupling
specifically. That validation is provided by
Figure~\ref{fig:shot_noise} in Section~\ref{sec:shot_noise}.

\subsection{Shot-Noise Variance of the Augmented Estimator}
\label{sec:shot_noise}

\begin{figure}[htp]
    \centering
    \includegraphics[width=0.7\linewidth]{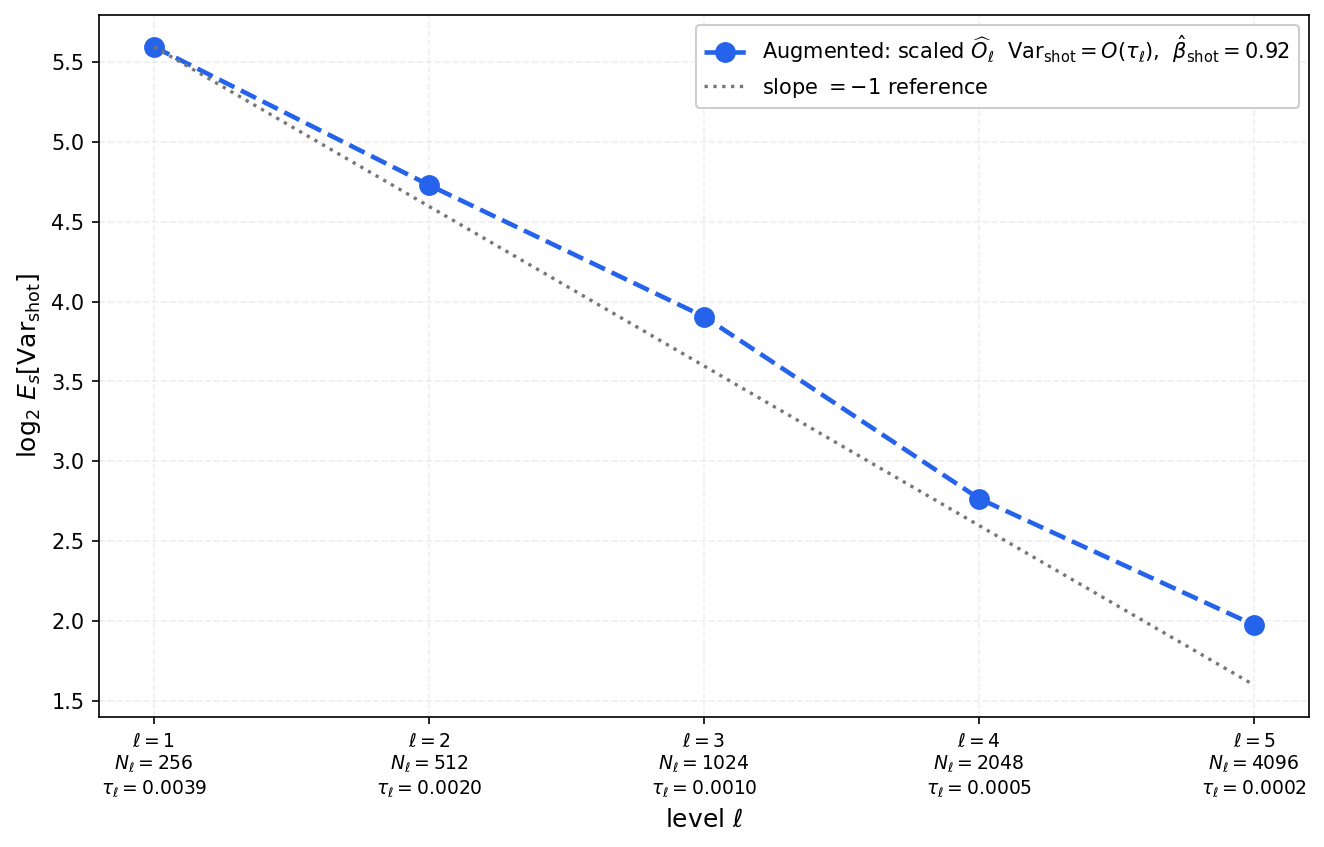}
    \caption{Mean conditional shot-noise variance of the augmented
      estimator,
      \(\mathbb{E}_{\mathbf{s}}[\operatorname{Var}_{\mathrm{shot}}]\),
      plotted on a \(\log_2\) scale against correction level \(\ell\)
      for the 6-qubit Heisenberg XYZ chain~\eqref{eq:heisenberg}.
      Each point is estimated by averaging~\eqref{eq:var_shot_sample}
      over \(n_\ell\in\{80,\ldots,300\}\) independently sampled
      index-sharing circuits at gate counts
      \(N_\ell\in\{256,\ldots,4096\}\), using
      \(\zeta_\ell=1/\sqrt{\tau_\ell}\).  The fitted decay rate is
      \(\hat\beta_{\mathrm{shot}}\approx 1.0\), matching the reference
      slope \(-1\).  Thus the shot-noise contribution decays as
      \(\mathcal{O}(\tau_\ell)=\mathcal{O}(2^{-\ell})\), as predicted
      by~\cref{eq:shot_var_decays}.}
    \label{fig:shot_noise}
\end{figure}

Figure 3 samples actual random qDRIFT circuits via the index-sharing
coupling and validates the measurement-level variance scaling of the
augmented estimator from Section~2.3: the scaled augmented observable
\(\widehat O_\ell\) reduces the conditional shot-noise variance from a
level-independent \(\mathcal{O}(1)\) floor to
\(\mathcal{O}(\tau_\ell)=\mathcal{O}(2^{-\ell})\).  This experiment uses
actual random qDRIFT circuits sampled with the index-sharing coupling,
rather than averaged-channel probabilities.

Speifically, for each level \(\ell\in\{1,\ldots,5\}\), we sample
\(n_\ell\in\{80,\ldots,300\}\) random index sequences
\(\mathbf{s}=(s_1,\dots,s_{2K})\).  The fine and coarse states are
propagated using~\eqref{eq:fine_unitary}--\eqref{eq:coarse_unitary}, and
we form the difference state
\(|e^{(\ell)}_K\rangle=|\psi^{(\ell)}_{2K}\rangle-
|\psi^{(\ell-1)}_K\rangle\).  With
\(\zeta_\ell=c/\sqrt{\tau_\ell}\) and \(c=1\), the augmented norm is
\(S_\ell(\mathbf{s})=1+\zeta_\ell^2\|e^{(\ell)}_K\|^2\).  For each fixed
sample path, the conditional shot-noise variance is
\begin{equation}
    \operatorname{Var}_{\mathrm{shot}}(\mathbf{s})
    =
    S_\ell(\mathbf{s})
    \langle\chi^{(\ell)}_K|
    \widehat O_\ell^2
    |\chi^{(\ell)}_K\rangle
    -
    Y_\ell(\mathbf{s})^2 .
    \label{eq:var_shot_sample}
\end{equation}
For \(O=Z_0\), \(O^2=I\), and the block structure gives the exact
identity
\[
    \langle\chi^{(\ell)}_K|
    \widehat O_\ell^2
    |\chi^{(\ell)}_K\rangle
    =
    \|e^{(\ell)}_K\|^2+\zeta_\ell^{-2}
    =
    \|e^{(\ell)}_K\|^2+\tau_\ell/c^2.
\]
This follows from the normalization of the fine and coarse states, which
implies
\(2\operatorname{Re}\langle e^{(\ell)}_K|\psi^{(\ell-1)}_K\rangle
=-\|e^{(\ell)}_K\|^2\).  We then average
\(\operatorname{Var}_{\mathrm{shot}}(\mathbf{s})\) over the sampled
index sequences.

The fitted rate \(\hat\beta_{\mathrm{shot}}\approx 1.0\) confirms that
\(\mathbb{E}_{\mathbf{s}}[\operatorname{Var}_{\mathrm{shot}}]
=\mathcal{O}(2^{-\ell})\).  Together with the index-sharing bound
\(V_\ell^{\mathrm{path}}=\mathcal{O}(2^{-\ell})\) from
Lemma~\ref{lemma:index_sharing}, this gives the total level-variance
decay \(V_\ell^{\mathrm{total}}=\mathcal{O}(2^{-\ell})\), which is the
\(\beta=1\) condition used in Theorem~\ref{thm:index_sharing_complexity}.
\subsection{Gate Complexity Validation}
\label{sec:gate_numerical}

\begin{figure}[t]
    \centering
    \includegraphics[width=0.8\linewidth]{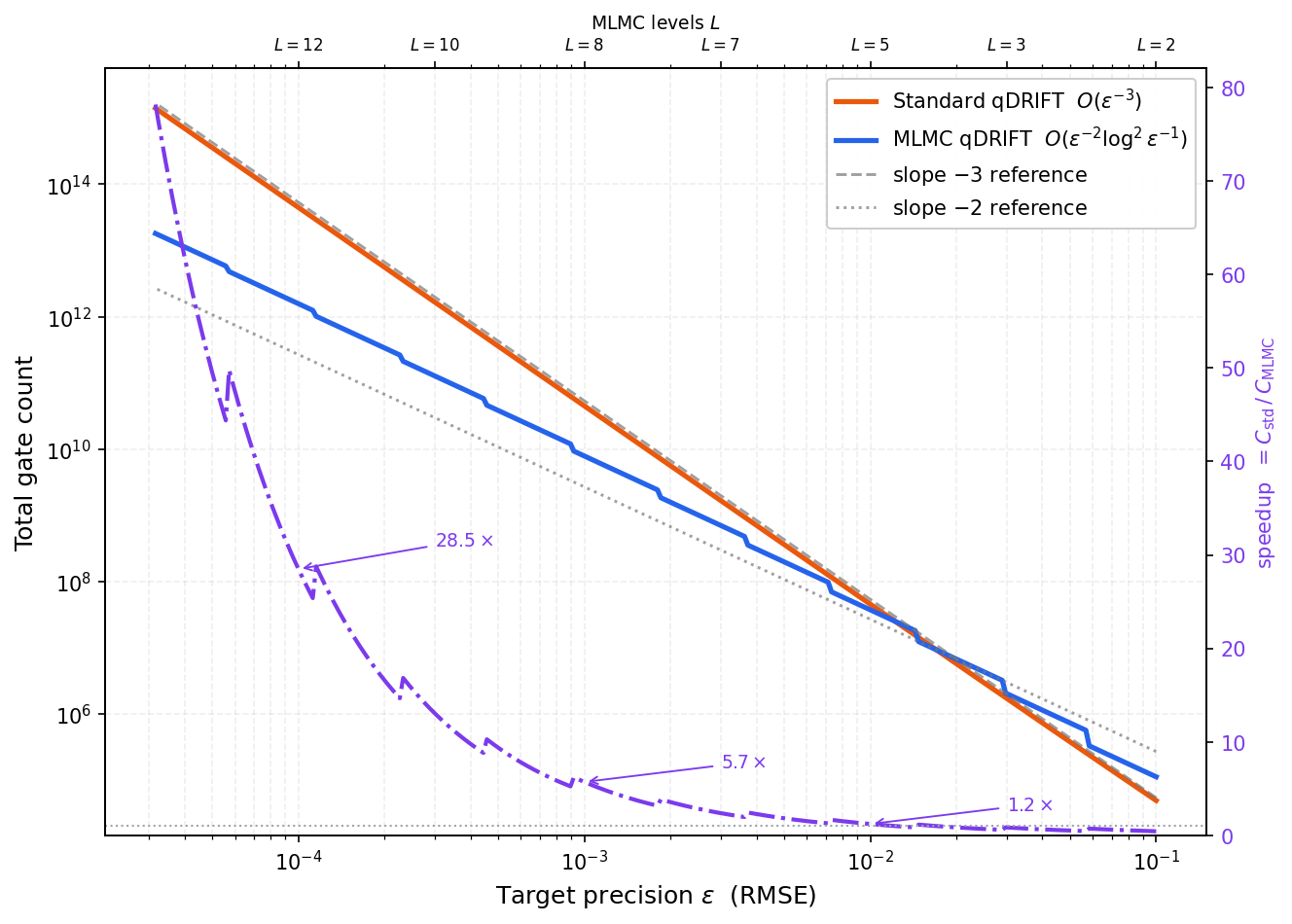}
    \caption{Total gate complexity as a function of target RMSE
      $\varepsilon$ for standard qDRIFT (orange) and MLMC qDRIFT
      (blue), computed from the cost
      formulae~\eqref{eq:std_cost}--\eqref{eq:mlmc_cost} instantiated
      for $O = Z_0$ with system-specific values $c_1 \approx 10.55$,
      $N_0 = 128$, and $\sigma^2 = 4p_{N_L}(1-p_{N_L})$. Dashed and
      dotted gray lines show the slope $-3$ and slope $-2$ references,
      both anchored to the MLMC curve at $\varepsilon = 0.02$. The
      right axis (purple, dash-dot) shows the gate-count speedup
      $C_{\mathrm{std}}/C_{\mathrm{MLMC}}$; the horizontal dotted line
      marks break-even. The crossover occurs near $\varepsilon^* \approx
      0.02$, consistent with the log-dominated
      estimate~\eqref{eq:crossover_log}. The upper axis shows the
      number of MLMC levels $L$ required at each precision.}
    \label{fig:gate_count}
\end{figure}

Figure~\ref{fig:gate_count} validates the cost comparison of
Section~\ref{sec:complexity} for the six-qubit Heisenberg example.  The
plot uses the standard qDRIFT cost~\eqref{eq:std_cost} and the
MLMC-qDRIFT cost~\eqref{eq:mlmc_cost}, with the MLMC Giles sum
\(\mathcal{S}_L=\sum_{\ell=0}^L\sqrt{V_\ell C_\ell}\) evaluated using
the level variances and costs from the preceding numerical experiments.
For \(O=Z_0\), the measurement probability satisfies
\[
    \langle Z_0\rangle = 2p-1.
\]
We therefore fit the probability bias as
\[
    |p_\ell-p_\infty|\approx \frac{c_p}{N_\ell},
\]
using levels \(\ell\in\{3,\ldots,7\}\), where the
\(\mathcal{O}(N_\ell^{-1})\) regime is cleanly observed.  This gives
\(c_p\approx 10.55\), corresponding to the observable-bias constant
\[
    B_Z \approx 2c_p .
\]
The single-sample variance for standard qDRIFT is
\[
    \sigma^2 = 4p_{N_L}(1-p_{N_L})\le 1,
\]
with \(p_{N_L}\) approximated by
\[
    p_{N_L}\approx p_\infty-\frac{c_p}{N_L},
    \qquad
    p_\infty\approx 0.7512 .
\]

For this system, \(\lambda^2T^2\approx 132\) and \(N_0=128\), so the
coarsest level is chosen near the natural scale suggested by the qDRIFT
bias bound.  This places the example near the transition between the
overhead-dominated and log-dominated crossover regimes discussed in
Section~\ref{sec:complexity}.  The observed crossover in
Figure~\ref{fig:gate_count} occurs near
\[
    \varepsilon^*\approx 0.02 .
\]
Below this precision, the multilevel allocation increasingly favors
cheap coarse samples and uses only a small number of expensive fine-level
corrections.  The resulting gate-count reductions are approximately
\(1.2\times\), \(5.7\times\), and \(28\times\) at
\(\varepsilon=10^{-2}\), \(10^{-3}\), and \(10^{-4}\), respectively.
This trend is consistent with the asymptotic speedup factor
\[
    \frac{C_{\mathrm{std}}(\varepsilon)}
         {C_{\mathrm{MLMC}}(\varepsilon)}
    =
    \mathcal{O}\!\left(
        \frac{\varepsilon^{-1}}{\log^2(1/\varepsilon)}
    \right),
\]
which follows from comparing
\(C_{\mathrm{std}}=\mathcal{O}(\varepsilon^{-3})\) with
\(C_{\mathrm{MLMC}}=\mathcal{O}(\varepsilon^{-2}\log^2(1/\varepsilon))\).

%══════════════════════════════════════════════════════════════════════════════

\section{Summary and Discussion}
\label{sec:discussion}

Multilevel Monte Carlo has been one of the notable successes of modern
computational probability: by coupling approximations at different
resolutions, it converts a costly single-level Monte Carlo estimator into
a hierarchy in which most samples are taken at low cost and only a few
are taken at high accuracy.  In this work we have adapted this principle
to randomized quantum simulation.  For qDRIFT-based observable
estimation, we constructed a hierarchy of circuit depths and coupled
adjacent levels by sharing random Hamiltonian-term samples.  This
index-sharing coupling yields a decaying variance for the level
corrections and reduces the total gate complexity from the standard
qDRIFT scaling
\(
    \mathcal{O}(\varepsilon^{-3})
\) to \( 
    \mathcal{O}\!\left(\varepsilon^{-2}\log^2(1/\varepsilon)\right),
\)
while retaining qDRIFT's absence of explicit dependence on the number of
Hamiltonian terms. To faciliate near-term implementations, We also introduced a scaled augmented-state formulation that makes the
quantum measurement shot noise decay at the same rate, thereby preserving
the multilevel variance reduction at the circuit measurement level. The numerical experiments confirm the predicted
variance decay and show that the asymptotic improvement can translate
into substantial gate-count savings at high precision.

The framework developed here is compatible with other techniques for
reducing the cost of randomized simulation.  For Hamiltonians with a few
large terms and many small terms, composite deterministic--randomized
partitioning methods~\cite{jin2025partially,hagan2023composite} could be
combined with MLMC-qDRIFT so that the multilevel estimator acts only on a
reduced random component with a smaller effective one-norm.  Similarly,
bias-reduction methods such as Richardson extrapolation, including those
used in qFLO~\cite{watson2024randomly}, are complementary to the present
variance-reduction approach.  In classical numerical analysis, multilevel
Monte Carlo has been successfully combined with Richardson--Romberg
extrapolation \cite{lemaire2017multilevel}; the same idea suggests a possible route to quantum
algorithms that reduce both the qDRIFT bias and the sampling variance.

More broadly, the significance of MLMC-qDRIFT is not limited to
Hamiltonian simulation.  Randomization is increasingly used as a design
principle in quantum algorithms, especially when circuit depth,
ancilla-assisted coherent control, or block-encoding overheads are
expensive.  Examples include randomized simulation of Lindblad dynamics
and thermal-state preparation~\cite{chen2025randomized}, randomized
adiabatic algorithms for quantum linear systems~\cite{subasi2019linear,
jennings2025randomized}, and randomized algorithms for sampling matrix
functions and solving linear-algebra problems~\cite{wang2024qubit}.
These algorithms all introduce statistical error in exchange for reduced
coherent quantum resources.  A natural direction for future work is to
develop multilevel couplings for this broader class of randomized quantum
algorithms.  Whenever a quantum algorithm admits a hierarchy of
approximations with increasing accuracy and cost, and whenever adjacent
levels can be coupled so that correction variances decay sufficiently
fast, MLMC-type variance reduction may provide a systematic way to turn
randomization into a precision-efficient computational tool.
\section{Acknowledgements}
This research is supported by the NSF Grants No. DMS-2111221 and No. CCF-2312456.

\bibliographystyle{plain}
% Loading bibliography database
\bibliography{references}

\appendix
\begin{center}
    \vspace{3em}
    \textbf{\Large Appendix}
\end{center}

\tableofcontents 

\renewcommand{\appendixname}{APPENDIX}
\renewcommand{\thesection}{\Alph{section}}
\renewcommand{\thesubsection}{\thesection.\arabic{subsection}}
\renewcommand{\thesubsubsection}{\thesubsection.\arabic{subsubsection}}
\makeatletter
\renewcommand{\p@subsection}{}
\renewcommand{\p@subsubsection}{}
\renewcommand{\theequation}{\arabic{subsection}.\arabic{equation}}
\makeatother

\renewcommand{\thefigure}{\arabic{figure}}
\renewcommand{\thetable}{\arabic{table}}
\renewcommand{\figurename}{Figure}
\setcounter{figure}{0}
\setcounter{equation}{0}
\setcounter{secnumdepth}{3}

\section{Multilevel Monte Carlo (MLMC) Methods}
\label{app:mlmc_methods}

The standard Monte Carlo (MC) method estimates the expectation
$\mathbb{E}[P]$ of a random variable by computing the empirical mean
of $n$ independent samples. While robust, this approach can be
computationally prohibitive when the samples themselves are based on
an expensive numerical approximation. If $P_L$ denotes a
high-accuracy (but costly) approximation of $P$, the Mean Squared
Error (MSE) scales as $\mathcal{O}(n^{-1})$, requiring a massive number of
samples to reduce discretization bias to an acceptable level.

MLMC circumvents this by defining a hierarchy of levels
$\ell = 0, 1, \dots, L$, where each level represents a numerical
approximation of increasing accuracy and cost.

\subsection{The Multilevel Identity}
The core innovation of MLMC is to exploit the linearity of
expectation. Instead of estimating $\mathbb{E}[P_L]$ directly, we
express it as a telescoping sum of expectations of differences:
\begin{equation}
    \mathbb{E}[P_L]
    = \mathbb{E}[P_0]
      + \sum_{\ell=1}^L \mathbb{E}[P_\ell - P_{\ell-1}].
\end{equation}
To estimate this efficiently, we construct an independent empirical
estimator $Y_\ell$ for each level. For the initial coarse level,
$Y_0$ estimates $\mathbb{E}[P_0]$. For subsequent levels
($\ell \ge 1$), $Y_\ell$ estimates the increment
$\mathbb{E}[P_\ell - P_{\ell-1}]$ using $n_\ell$ independent samples:
\begin{equation}
    Y_\ell
    = \frac{1}{n_\ell} \sum_{i=1}^{n_\ell}
      \bigl( P_\ell^{(i)} - P_{\ell-1}^{(i)} \bigr).
\end{equation}
The efficiency of this method relies on \textit{coupling}: if
$P_\ell$ and $P_{\ell-1}$ are simulated using the same underlying
random variables such that they are highly correlated, the variance
of a single sample difference $V_\ell = \operatorname{Var}(P_\ell -
P_{\ell-1})$ can be significantly smaller than the variance of the
individual terms, allowing fewer samples $n_\ell$ at the
computationally expensive fine levels.

\subsection{Complexity and Convergence}
The total cost of the MLMC estimator $Y = \sum_{\ell=0}^L Y_\ell$
is
\begin{equation}
    \mathrm{Cost}_{\mathrm{total}}
    = \sum_{\ell=0}^L n_\ell\, \mathcal{C}_\ell,
\end{equation}
where $\mathcal{C}_\ell$ is the expected cost per sample at level
$\ell$. The following theorem formalises the complexity gains
achieved when the variance of the differences decays faster than
the cost increases.

\begin{theorem}[\textbf{MLMC Complexity Theorem,
\cite{giles2015multilevel}}]
\label{thm:mlmc_complexity}
Let $P$ denote a random variable and $P_\ell$ its level-$\ell$
numerical approximation. Suppose there exist independent estimators
$Y_\ell$ (where $Y_0$ is an unbiased estimator of $\mathbb{E}[P_0]$
and $Y_\ell$ is an unbiased estimator of $\mathbb{E}[P_\ell -
P_{\ell-1}]$). Let $V_\ell = \operatorname{Var}(P_\ell -
P_{\ell-1})$ be the variance of a single sample difference and
$\mathcal{C}_\ell$ its expected cost. Let $r$ be the refinement
factor between levels. If there exist positive constants $\alpha,
\beta, \gamma, c_1, c_2, c_3$ with $\alpha \ge \tfrac{1}{2}
\min(\beta,\gamma)$ and:
\begin{enumerate}
    \item \textbf{Bias decay:}
          $|\mathbb{E}[P_\ell - P]| \le c_1\, r^{-\alpha\ell}$
    \item \textbf{Variance decay:}
          $V_\ell \le c_2\, r^{-\beta\ell}$
    \item \textbf{Cost growth:}
          $\mathcal{C}_\ell \le c_3\, r^{\gamma\ell}$
\end{enumerate}
then there exists a positive constant $c_4$ such that for any
$\varepsilon < e^{-1}$ there are values $L$ and $\{n_\ell\}$ for
which the multilevel estimator $Y = \sum_{\ell=0}^L Y_\ell$ has
mean square error $\mathbb{E}[(Y - \mathbb{E}[P])^2] < \varepsilon^2$
with total complexity bounded by
\begin{equation}
    \mathbb{E}[C]
    \le \begin{cases}
        c_4\,\varepsilon^{-2}
          & \beta > \gamma, \\[2pt]
        c_4\,\varepsilon^{-2}(\log\varepsilon)^2
          & \beta = \gamma, \\[2pt]
        c_4\,\varepsilon^{-2-(\gamma-\beta)/\alpha}
          & \beta < \gamma.
    \end{cases}
\end{equation}
\end{theorem}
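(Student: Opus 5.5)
This is the classical MLMC complexity theorem of Giles, restated in the paper's notation, so I would prove it by the standard argument in three steps: MSE decomposition, choice of $L$ from the bias condition, and Lagrangian optimization of $\{n_\ell\}$ followed by case analysis of a geometric sum.

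\emph{Step 1: the MSE.} Independence of the $Y_\ell$ and unbiasedness give $\mathbb{E}[Y]=\mathbb{E}[P_L]$. Hence
\[
\mathbb{E}[(Y-\mathbb{E}[P])^2]=(\mathbb{E}[P_L-P])^2+\sum_{\ell=0}^L V_\ell/n_\ell .
\]
It therefore suffices to make the squared bias at most $\varepsilon^2/2$ and the variance sum strictly less than $\varepsilon^2/2$.

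\emph{Step 2: choosing $L$.} Take
\[
L=\Bigl\lceil \alpha^{-1}\log_r\bigl(\sqrt2\,c_1/\varepsilon\bigr)\Bigr\rceil .
\]
Then $c_1r^{-\alpha L}\le\varepsilon/\sqrt2$. This choice satisfies $L\le \alpha^{-1}\log_r(\sqrt2 c_1\varepsilon^{-1})+1$ and $r^{\alpha L}< r^\alpha\sqrt2 c_1/\varepsilon$. The assumption $\varepsilon<e^{-1}$ ensures $\log(1/\varepsilon)\ge1$, so both additive constants and $L+1$ are absorbed as $L+1=\mathcal{O}(\log(1/\varepsilon))$.

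\emph{Step 3: choosing $n_\ell$ and bounding the cost.} Set
\[
n_\ell=\Bigl\lceil 2\varepsilon^{-2}\sqrt{V_\ell/\mathcal{C}_\ell}\,\textstyle\sum_{k}\sqrt{V_k\mathcal{C}_k}\Bigr\rceil,
\]
exactly as in \eqref{eq:optimal_allocation}. Dropping the ceiling shows the variance sum is at most $\varepsilon^2/2$. (Strictness is obtained by a harmless constant tweak, e.g.\ replacing $2$ by $3$, which changes only $c_4$.) Using $n_\ell\le(\cdots)+1$, the cost obeys
\[
\mathbb{E}[C]\le 2\varepsilon^{-2}\Bigl(\sum_{\ell\le L}\sqrt{V_\ell\mathcal{C}_\ell}\Bigr)^2+\sum_{\ell\le L}\mathcal{C}_\ell .
\]
By the variance and cost hypotheses, $\sqrt{V_\ell\mathcal{C}_\ell}\le\sqrt{c_2c_3}\,r^{(\gamma-\beta)\ell/2}$. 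The three cases follow from the behaviour of this geometric sum:
\begin{enumerate}
\item $\beta>\gamma$: the sum is bounded by a constant, giving $\mathcal{O}(\varepsilon^{-2})$.
\item $\beta=\gamma$: the sum is at most $\sqrt{c_2c_3}(L+1)=\mathcal{O}(\log(1/\varepsilon))$, giving $\varepsilon^{-2}\log^2\varepsilon$.
\item $\beta<\gamma$: the sum is $\mathcal{O}(r^{(\gamma-\beta)L/2})$. By Step 2, $r^{L}=\mathcal{O}(\varepsilon^{-1/\alpha})$, so the square is $\mathcal{O}(\varepsilon^{-(\gamma-\beta)/\alpha})$ and the total is $\varepsilon^{-2-(\gamma-\beta)/\alpha}$.
\end{enumerate}

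\emph{The main obstacle} is the ceiling remainder $\sum_\ell\mathcal{C}_\ell\le c_3 r^{\gamma L}/(1-r^{-\gamma})=\mathcal{O}(\varepsilon^{-\gamma/\alpha})$. This is where $\alpha\ge\tfrac12\min(\beta,\gamma)$ enters, and it needs separate handling in each case.
\begin{enumerate}
\item If $\beta\ge\gamma$, then $\gamma\le2\alpha$, so the remainder is $\mathcal{O}(\varepsilon^{-2})$, which is dominated in both the $\beta>\gamma$ and $\beta=\gamma$ cases.
\item If $\beta<\gamma$, then $\beta\le2\alpha$ gives $\gamma/\alpha\le 2+(\gamma-\beta)/\alpha$, so the remainder is again dominated.
\end{enumerate}
Collecting the constants into $c_4$, which depends only on $c_1,c_2,c_3,\alpha,\beta,\gamma,r$, completes the argument. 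For the application in \cref{thm:index_sharing_complexity}, these are the constants with $\alpha=\beta=\gamma=1$ and $r=2$.
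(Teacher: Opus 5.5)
Your proposal is correct and is the standard Giles argument: MSE split, bias-driven choice of $L$, Cauchy--Schwarz allocation of $n_\ell$, then geometric-sum case analysis. The paper does not reprove this theorem but cites \cite{giles2015multilevel}, and its own specialization to $\beta=\gamma=1$ in Appendix~\ref{app:MLMC_Complexity} follows the same template. Apart from clipping $L$ at zero (as in \eqref{eq:L_choice}), nothing needs fixing. Your explicit bound on the ceiling remainder $\sum_{\ell}\mathcal{C}_\ell$, which is the one place where $\alpha\ge\frac12\min(\beta,\gamma)$ is actually used, is more careful than Appendix~\ref{app:MLMC_Complexity}, which bounds $\mathbb{E}[C]$ by $2\varepsilon^{-2}\mathcal{S}^2$ alone.
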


\section{Variance Bounds (Proof of Lemma)} \label{app:proof_of_lemma}

\textbf{Lemma \ref{lemma:index_sharing}.}
(Index-Sharing Variance Bound)
Let $P_\ell^{(\mathbf{j}_n,\ell)}$ and $P_{\ell-1}^{(\tilde{\mathbf{j}}_n,\ell)}$
be the random variables from the index-sharing coupling. Under
$\|H_j\| \le 1$ and $\|O\| \le 1$, the correction variance satisfies
\begin{equation}
    V_\ell
    =
    \operatorname{Var}\!\left(
        P_\ell^{(\mathbf{j}_n,\ell)}
        - P_{\ell-1}^{(\tilde{\mathbf{j}}_n,\ell)}
    \right)
    \le
    C_{\mathrm{IS}}\,\frac{\lambda^2t^2}{N_0}\,2^{-\ell},
    \label{eq:variance_bound_app}
\end{equation}
where $C_{\mathrm{IS}}$ is a universal constant independent of $\ell$,
$N_0$, and the number of Hamiltonian terms $M$.

\begin{proof}
Bounding the variance by the second moment, it is enough to estimate
$\mathbb{E}\left[|P_\ell^{(\mathbf{j}_n,\ell)}
- P_{\ell-1}^{(\tilde{\mathbf{j}}_n,\ell)}|^2\right]$.
Set
\[
    \tau := \tau_\ell, \qquad K := N_{\ell-1} = \frac{N_\ell}{2}.
\]
For the $k$-th coupled block, the two i.i.d.\ indices drawn from
$\{p_j\}_{j=1}^M$ are $j_{n,2k-1}$ and $j_{n,2k}$. Write
\[
    a_k := j_{n,2k-1}, \qquad b_k := j_{n,2k},
\]
and define
\[
    A_k := H_{a_k}, \qquad B_k := H_{b_k}.
\]
Thus, according to the index-sharing coupling
\eqref{eq:fine_unitary}--\eqref{eq:coarse_unitary},
\[
    F_k := e^{-i\tau B_k}e^{-i\tau A_k},
    \qquad
    C_k := e^{-i2\tau A_k}.
\]
Let
\[
    |\psi_k^f\rangle = F_k|\psi_{k-1}^f\rangle,
    \qquad
    |\psi_k^c\rangle = C_k|\psi_{k-1}^c\rangle,
    \qquad
    |\psi_0^f\rangle = |\psi_0^c\rangle = |\psi_0\rangle.
\]
At the final synchronised time,
\[
    P_\ell^{(\mathbf{j}_n,\ell)}
    = \langle\psi_K^f|O|\psi_K^f\rangle,
    \qquad
    P_{\ell-1}^{(\tilde{\mathbf{j}}_n,\ell)}
    = \langle\psi_K^c|O|\psi_K^c\rangle.
\]
Introducing the middle quadratic form
$\langle\psi_K^f|O|\psi_K^c\rangle$, we get
\begin{align}
    P_\ell^{(\mathbf{j}_n,\ell)} - P_{\ell-1}^{(\tilde{\mathbf{j}}_n,\ell)}
    &=
    \langle\psi_K^f|O
    \bigl(|\psi_K^f\rangle - |\psi_K^c\rangle\bigr)
    +
    \bigl(\langle\psi_K^f| - \langle\psi_K^c|\bigr)
    O|\psi_K^c\rangle.
\end{align}
Since $\|O\| \le 1$ and both states are normalised,
\begin{equation}
    |P_\ell^{(\mathbf{j}_n,\ell)} - P_{\ell-1}^{(\tilde{\mathbf{j}}_n,\ell)}|
    \le
    2\||\psi_K^f\rangle - |\psi_K^c\rangle\|.
    \label{eq:observable_to_state_error}
\end{equation}

It remains to bound this synchronised state difference. Define
\[
    e_m := |\psi_m^f\rangle - |\psi_m^c\rangle.
\]
Then
\begin{equation}
    e_m
    =
    F_m e_{m-1} + d_m,
    \qquad
    d_m := (F_m - C_m)|\psi_{m-1}^c\rangle.
    \label{eq:state_error_recursion_app}
\end{equation}
Because $F_m$ is unitary,
\begin{equation}
    \|e_m\|^2
    =
    \|e_{m-1}\|^2
    + 2\operatorname{Re}\langle e_{m-1}, F_m^\dagger d_m\rangle
    + \|d_m\|^2.
    \label{eq:error_square_recursion_app}
\end{equation}

We now estimate $d_m$ using the variation-of-constants formula. Since
\[
    C_m = e^{-i2\tau A_m} = e^{-i\tau A_m}e^{-i\tau A_m},
\]
we have
\begin{equation}
    F_m - C_m
    =
    \bigl(e^{-i\tau B_m} - e^{-i\tau A_m}\bigr)e^{-i\tau A_m}.
    \label{eq:block_defect_factorization}
\end{equation}

The variation-of-constants formula gives

\begin{equation}
    e^{-i\tau B_m} - e^{-i\tau A_m}
    =
    -i\int_0^\tau
    e^{-i(\tau-s)B_m}
    (B_m - A_m)
    e^{-isA_m}\,ds.   \label{eq:vocf_exp_difference}
\end{equation}
Hence
\begin{equation}
    d_m
    =
    -i\int_0^\tau
    e^{-i(\tau-s)B_m}
    (B_m - A_m)
    e^{-i(s+\tau)A_m}
    |\psi_{m-1}^c\rangle\,ds.
    \label{eq:dm_vocf}
\end{equation}

Since all exponential factors are unitary and $\|A_m\|, \|B_m\| \le 1$,
\begin{equation}
    \|d_m\|
    \le
    \int_0^\tau \|B_m - A_m\|\,ds
    \le
    2\tau.
    \label{eq:dm_bound_app}
\end{equation}

It remains to control the cross term in
\eqref{eq:error_square_recursion_app}. Let
\[
    \xi_m := F_m^\dagger d_m.
\]
Multiplying \eqref{eq:dm_vocf} by
$F_m^\dagger = e^{i\tau A_m}e^{i\tau B_m}$ gives the exact identity
\begin{equation}
    \xi_m
    =
    -i\int_0^\tau
    e^{i\tau A_m}
    e^{isB_m}
    (B_m - A_m)
    e^{-i(s+\tau)A_m}
    |\psi_{m-1}^c\rangle\,ds.
    \label{eq:xim_vocf}
\end{equation}
Let $\mathcal{F}_{m-1}$ be the sigma-algebra generated by the sampled
indices $\{j_{n,1}, j_{n,2}, \dots, j_{n,2(m-1)}\}$ in the first $m-1$
coupled blocks. Conditional on $\mathcal{F}_{m-1}$, the vectors
$e_{m-1}$ and $|\psi_{m-1}^c\rangle$ are fixed, while $A_m =
H_{j_{n,2m-1}}$ and $B_m = H_{j_{n,2m}}$ are independent and
identically distributed.

For $0 \le s \le \tau$, write the integrand operator in
\eqref{eq:xim_vocf} as
\[
    Q_m(s) :=
    e^{i\tau A_m}
    e^{isB_m}
    (B_m - A_m)
    e^{-i(s+\tau)A_m}.
\]
By the integral identity
\[
    e^{iuH} - I = i\int_0^u H e^{irH}\,dr,
    \qquad
    \|e^{iuH} - I\| \le |u|\,\|H\|,
\]
and using $\|A_m\|, \|B_m\| \le 1$, we have
\begin{align}
    \|Q_m(s) - (B_m - A_m)\|
    &\le
    \|e^{i\tau A_m} - I\|\,\|B_m - A_m\|       \notag\\
    &\quad+
    \|e^{isB_m} - I\|\,\|B_m - A_m\|            \notag\\
    &\quad+
    \|B_m - A_m\|\,\|e^{-i(s+\tau)A_m} - I\|    \notag\\
    &\le
    2\tau + 2s + 2(s+\tau)
    \le
    8\tau.
    \label{eq:integrand_remainder_bound_app}
\end{align}
Since $A_m = H_{j_{n,2m-1}}$ and $B_m = H_{j_{n,2m}}$ are i.i.d.,
\[
    \mathbb{E}[B_m - A_m] = 0.
\]
Therefore, taking conditional expectation in \eqref{eq:xim_vocf} and
using \eqref{eq:integrand_remainder_bound_app},
\begin{equation}
    \left\|
    \mathbb{E}\!\left[
        \xi_m\,\middle|\,\mathcal{F}_{m-1}
    \right]
    \right\|
    \le
    8\tau^2.
    \label{eq:xi_conditional_mean_app}
\end{equation}

Taking conditional expectation in
\eqref{eq:error_square_recursion_app}, and using
\eqref{eq:dm_bound_app} and \eqref{eq:xi_conditional_mean_app}, gives
\begin{align}
    \mathbb{E}\!\left[
        \|e_m\|^2\,\middle|\,\mathcal{F}_{m-1}
    \right]
    &\le
    \|e_{m-1}\|^2
    + 2\|e_{m-1}\|
    \left\|
    \mathbb{E}\!\left[
        \xi_m\,\middle|\,\mathcal{F}_{m-1}
    \right]
    \right\|
    + 4\tau^2.
\end{align}
Since $e_{m-1}$ is the difference of two normalised states,
$\|e_{m-1}\| \le 2$. Hence
\begin{equation}
    \mathbb{E}\!\left[
        \|e_m\|^2\,\middle|\,\mathcal{F}_{m-1}
    \right]
    \le
    \|e_{m-1}\|^2 + 36\tau^2.
\end{equation}
Taking full expectation and iterating from $e_0 = 0$ yields
\begin{equation}
    \mathbb{E}\|e_K\|^2
    \le
    36K\tau^2.
    \label{eq:state_mean_square_bound_app}
\end{equation}
Using $K = N_{\ell-1} = N_\ell/2$ and $\tau = \tau_\ell = \lambda t/N_\ell$,
we obtain
\begin{equation}
    \mathbb{E}\|e_K\|^2
    \le
    18\,\frac{\lambda^2t^2}{N_\ell}.
    \label{eq:state_error_final_app}
\end{equation}

Combining \eqref{eq:observable_to_state_error} with
\eqref{eq:state_error_final_app}, we conclude that
\[
    V_\ell
    \le
    \mathbb{E}|P_\ell^{(\mathbf{j}_n,\ell)}
    - P_{\ell-1}^{(\tilde{\mathbf{j}}_n,\ell)}|^2
    \le
    4\,\mathbb{E}\|e_K\|^2
    \le
    72\,\frac{\lambda^2t^2}{N_\ell}.
\]
Finally, since $N_\ell = N_0\,2^\ell$,
\[
    V_\ell
    \le
    72\,\frac{\lambda^2t^2}{N_0}\,2^{-\ell}.
\]
Thus the index-sharing correction variance has decay rate $\beta = 1$.
Equivalently, the bound~\eqref{eq:variance_bound_app} holds with
$C_{\mathrm{IS}} = 72$.
\end{proof}

\section{Complexity Analysis of MLMC-qDRIFT (Proof of Theorem)} \label{app:MLMC_Complexity}

\textbf{Theorem \ref{thm:index_sharing_complexity}.} [Complexity of Index-Sharing ML-qDRIFT] 
Let
\[
    P = \operatorname{Tr}\!\left(O\,e^{-iHt}\rho\,e^{iHt}\right),
\]
and assume $\|H_j\| \le 1$ and $\|O\| \le 1$. Choose the finest level
$L$ so that the qDRIFT bias satisfies
\begin{equation}
    \bigl|\mathbb{E}[P_L] - P\bigr|
    \le \frac{\varepsilon}{\sqrt{2}},
\end{equation}
and choose the sample counts $n_\ell$ according
to~\eqref{eq:optimal_allocation}, so that
\begin{equation}
    \sum_{\ell=0}^{L}\frac{V_\ell}{n_\ell}
    \le \frac{\varepsilon^2}{2}.
\end{equation}
Then the MLMC estimator $\widehat{Y}$ of~\eqref{eq:mlmc_estimator}
satisfies
\begin{equation}
    \mathbb{E}\!\left[(\widehat{Y} - P)^2\right]^{1/2}
    \le \varepsilon.
\end{equation}
Moreover, under the index-sharing variance bound~\eqref{eq:variance_bound},
the expected total gate count obeys
\begin{equation}
    \mathbb{E}[C]
    =
    \mathcal{O}\!\left(
        t^2\lambda^2\,
        \varepsilon^{-2}
        \log^2(1/\varepsilon)
    \right).
\end{equation}

\begin{proof}
We verify the three parameters $\alpha, \beta, \gamma$ of the MLMC
Complexity Theorem~\cite[Theorem~1]{giles2015multilevel} and then
compute the resulting gate count explicitly.

\textbf{Variance decay ($\beta = 1$).}
Lemma~\ref{lemma:index_sharing} gives 
\[
    V_\ell
    \le
    C_{\rm IS}\frac{\lambda^2t^2}{N_0}2^{-\ell},
\]
so $\beta=1$.

\textbf{Cost per sample ($\gamma = 1$).}
Each level-$\ell$ correction requires both the fine circuit of depth
$N_\ell$ and the coarse circuit of depth $N_{\ell-1}$, giving combined
gate count
\begin{equation}
    \mathcal{C}_\ell
    = N_\ell + N_{\ell-1}
    = N_0\,2^\ell + N_0\,2^{\ell-1}
    = \tfrac{3}{2}N_0\,2^\ell,
\end{equation}
which grows as $2^\ell$, so $\gamma = 1$.

\textbf{Bias decay ($\alpha = 1$).}
The Campbell diamond-norm bound~\eqref{eq:qdrift_bias} gives
$|\mathbb{E}[P_L] - P| \le 2\lambda^2 t^2/N_L = \mathcal{O}(2^{-L})$, so
$\alpha = 1$. The condition $\alpha \ge \tfrac{1}{2}\min(\beta,\gamma)
= \tfrac{1}{2}$ is satisfied.

\textbf{Giles sum and total cost.}
Since $\beta = \gamma = 1$, the upper bound on
$\sqrt{V_\ell\,\mathcal{C}_\ell}$ is constant across all levels:
substituting the bounds above,
\begin{equation}
    \sqrt{V_\ell\mathcal C_\ell}
    \le
    \sqrt{
        C_{\rm IS}\frac{\lambda^2t^2}{N_0}2^{-\ell}
        \cdot
        \frac32 N_0 2^\ell
    }
    =
    \sqrt{\frac{3C_{\rm IS}}2}\,\lambda t .
\end{equation}
where $N_0$, $2^{-\ell}$, and $2^\ell$ cancel exactly, leaving a
bound independent of both $\ell$ and $N_0$. The Cauchy--Schwarz optimal allocation~\cite{giles2015multilevel}
then gives
\begin{equation}
    \mathbb{E}[C]
    \le \frac{2}{\varepsilon^2}
        \Bigl(\sum_{\ell=0}^{L}\sqrt{V_\ell\,\mathcal{C}_\ell}\Bigr)^2
    \le \frac{2}{\varepsilon^2}\,(L+1)^2
        \cdot \frac{3C_{\rm IS}}{2}\,t^2\lambda^2
    =
    \frac{3C_{\rm IS}\,t^2\lambda^2\,(L+1)^2}{\varepsilon^2}.
\end{equation}

To suppress the bias to $\varepsilon/\sqrt{2}$, the condition
$2\lambda^2 t^2/N_L \le \varepsilon/\sqrt{2}$ requires
$L = \lceil\log_2(\varepsilon^{-1})\rceil + \mathcal{O}(1)$, so that
$L + 1 \le \tfrac{3}{2}\log_2\varepsilon^{-1}$ for all sufficiently
small $\varepsilon$. Hence $(L+1)^2 \le \tfrac{8}{3}
(\log_2\varepsilon^{-1})^2$, and substituting,
\begin{equation}
    \mathbb{E}[C]
    \le
    \frac{
        3C_{\rm IS}\cdot \tfrac{8}{3}\,
        t^2\lambda^2
    }{\varepsilon^2}
    (\log_2\varepsilon^{-1})^2
    =
    \frac{
        8C_{\rm IS}\,t^2\lambda^2
    }{\varepsilon^2}
    (\log_2\varepsilon^{-1})^2 .
\end{equation}
Thus the total expected gate count is
\[
    \mathbb{E}[C]
    =
    \mathcal{O}\!\left(
        t^2\lambda^2\varepsilon^{-2}
        [\log(1/\varepsilon)]^2
    \right).
\]
The cancellation of $N_0$ in $\sqrt{V_\ell\,\mathcal{C}_\ell}$ follows
from the variance bound $V_\ell=\mathcal{O}(\lambda^2t^2/N_\ell)$ and
the cost growth $\mathcal{C}_\ell=\mathcal{O}(N_\ell)$. Hence the
complexity constant depends on the physical parameters only through
$\lambda t$, up to the universal constant $C_{\rm IS}$.
\end{proof}

\section{Detailed Complexity Derivations}
\label{app:complexity}

\subsection{MLMC-qDRIFT gate count}

The finest level $L$ is chosen to satisfy the bias requirement
$|\mathbb{E}[P_L] - P| \le \varepsilon/\sqrt{2}$.  Since
$N_L = N_0\,2^L$, it is sufficient to take
\begin{equation}
    L
    =
    \left\lceil
    \log_2\!\left(
        \frac{\sqrt{2}B}{\varepsilon N_0}
    \right)
    \right\rceil_+,
    \qquad
    \lceil x \rceil_+ := \max\{0, \lceil x \rceil\},
    \label{eq:L_choice}
\end{equation}
so $L = \mathcal{O}(\log(1/\varepsilon))$.

The cost of one level-$\ell$ sample is
\begin{equation}
    C_0 = N_0,
    \qquad
    C_\ell = N_\ell + N_{\ell-1} = \tfrac{3}{2}N_\ell,
    \quad \ell \ge 1,
    \label{eq:level_cost}
\end{equation}
where $C_\ell$ counts the fine and coupled coarse qDRIFT paths needed
to form one correction sample.  The key cancellation is
\begin{equation}
    \sqrt{V_\ell C_\ell}
    \le
    \sqrt{\frac{A}{N_\ell} \cdot \frac{3}{2}N_\ell}
    =
    \sqrt{\frac{3A}{2}},
    \qquad \ell \ge 1,
    \label{eq:key_cancellation}
\end{equation}
so variance and cost grow at exactly matching rates ($\beta = \gamma = 1$).

The optimal MLMC sample allocation is
\begin{equation}
    n_\ell
    =
    \left\lceil
    \frac{2}{\varepsilon^2}
    \sqrt{\frac{V_\ell}{C_\ell}}\,\mathcal{S}_L
    \right\rceil,
    \qquad
    \mathcal{S}_L
    :=
    \sum_{\ell'=0}^{L} \sqrt{V_{\ell'} C_{\ell'}},
    \label{eq:allocation}
\end{equation}
which gives
\begin{equation}
    \mathcal{S}_L
    \le
    \sqrt{V_0 N_0}
    +
    L\sqrt{\frac{3A}{2}}.
    \label{eq:giles_sum}
\end{equation}
Since $L = \mathcal{O}(\log(1/\varepsilon))$, the total MLMC gate count is
\begin{equation}
    C_{\mathrm{MLMC}}(\varepsilon)
    =
    \frac{2\mathcal{S}_L^2}{\varepsilon^2}
    =
    \mathcal{O}\!\left(\varepsilon^{-2}\log^2(1/\varepsilon)\right),
\end{equation}
as stated in \cref{eq:mlmc_cost}.

\subsection{Crossover regime analysis}
\label{app:crossover}

The crossover precision $\varepsilon^*$ is determined by equating the
leading costs in \cref{eq:std_cost,eq:mlmc_cost}, giving
\begin{equation}
    \frac{\sqrt{2}\,B\sigma^2}{\varepsilon}
    =
    \mathcal{S}_L^2,
    \label{eq:crossover_condition}
\end{equation}
which is generally transcendental due to the $\varepsilon$-dependence of
$\mathcal{S}_L$ through $L$.  Two limiting regimes are tractable.

\textit{Overhead-dominated regime.}
If $\sqrt{V_0 N_0} \gg L\sqrt{3A/2}$, then
$\mathcal{S}_L \approx \sqrt{V_0 N_0}$ and
\begin{equation}
    \varepsilon^*_{\mathrm{oh}}
    \approx
    \frac{\sqrt{2}\,B\sigma^2}{V_0 N_0}.
    \label{eq:crossover_overhead}
\end{equation}
A large base depth $N_0$, a large coarsest-level variance $V_0$, or a
small standard-qDRIFT variance $\sigma^2$ all push the crossover to
smaller $\varepsilon$.

\textit{Correction-level-dominated regime.}
If $L\sqrt{3A/2} \gg \sqrt{V_0 N_0}$, then
$\mathcal{S}_L \approx L\sqrt{3A/2}$ and the crossover condition becomes
\begin{equation}
    \frac{\sqrt{2}\,B\sigma^2}{\varepsilon}
    \approx
    \frac{3A}{2}
    \log_2^2\!\left(
        \frac{\sqrt{2}B}{\varepsilon N_0}
    \right).
    \label{eq:crossover_log}
\end{equation}
Setting $x := \log_2(\sqrt{2}B / (\varepsilon N_0))$ so that
$\varepsilon = \sqrt{2}B/(N_0\,2^x)$, this reduces to the scalar equation
\begin{equation}
    \sigma^2 N_0\,2^x
    \approx
    \frac{3A}{2}\,x^2,
    \label{eq:crossover_x}
\end{equation}
which is solved numerically in \cref{sec:numerics}.  For systems with
large effective one-norm $\lambda t$, the constants $B$, $A$, and $N_0$
are all large, and the overhead-dominated regime can persist to relatively
small $\varepsilon$.  Composite deterministic--randomized
decompositions~\cite{jin2025partially, hagan2023composite} reduce the
effective one-norm seen by the randomized layer, lowering these constants
and shifting $\varepsilon^*$ toward more practical precision targets.

\end{document}